\documentclass[a4paper,UKenglish]{lipics}

 \usepackage{stmaryrd}
\usepackage{microtype}
\usepackage{extarrows}
\usepackage{algorithm,algorithmic}
\usepackage{tikz}
\usetikzlibrary{decorations.pathmorphing} 
\usepackage{booktabs}

\newcommand{\comInAlign}[1]{\text{\emph{(#1)}}}

\newcommand{\act}[1]{\xlongrightarrow{#1}}          
\newcommand{\rate}{\alpha}
\newcommand{\FB}{FB}
\newcommand{\BB}{BB}
\newcommand{\OSB}{F}
\newcommand{\ESB}{B}

\newcommand{\abs}[1]{\left\lvert#1\right\rvert} 
\newcommand{\RE}{\mathbb{R}}  
\newcommand{\multiset}[1]{\{\!|#1|\!\}}
\newcommand{\M}{\mathcal{M}}  
\newcommand{\ms}{\mathcal{MS}}
\newcommand{\MS}{\ms}
\newcommand{\pa}{\mathcal{H}} 
\newcommand{\Y}{H}
\newcommand{\fr}{\mathbf{fr}}        
\newcommand{\R}{\mathcal{R}}  
\newcommand{\crr}{\mathbf{crr}}        
\newcommand{\gr}{\mathbf{pr}}        
\newcommand{\reduce}[2]{{{#1}^{#2}}}            
\newcommand{\canonical}[1]{{\mathcal{X}^{#1}}}    

\newcommand{\messageAboutProofs}{For the sake of readability all proofs are provided in a separate appendix.}
\newcommand{\messageAboutK}{All discussed $\kappa$-encodings are shown in Appendix A.4 and are available for download.}

\newtheorem{proposition}[theorem]{\textbf{Proposition}}


\bibliographystyle{plain}

\title{Forward and Backward Bisimulations for Chemical Reaction Networks}
\titlerunning{Forward and Backward Bisimulations for Chemical Reaction Networks} 

\author[1]{Luca Cardelli}
\author[2]{Mirco Tribastone}
\author[3]{Max Tschaikowski}
\author[4]{\\ Andrea Vandin}
\affil[1]{Microsoft Research \& University of Oxford, UK\\
	\texttt{luca@microsoft.com}}
\affil[2-4]{IMT Institute for Advanced Studies Lucca, Italy\\
	\texttt{\{mirco.tribastone,max.tschaikowski,andrea.vandin\}@imtlucca.it}}
\authorrunning{L. Cardelli et al.} 

\Copyright{Luca Cardelli, Mirco Tribastone, Max Tschaikowski, and Andrea Vandin} 

\subjclass{D.2.4, G.1.7, J.2}
\keywords{Chemical reaction networks -- ordinary differential equations -- bisimulation -- partition refinement}

\serieslogo{}

\allowdisplaybreaks[3]

\synctex=1

\begin{document}
	
	\maketitle
	
	\begin{abstract}
    We present two quantitative behavioral equivalences over species of a chemical reaction network (CRN) with semantics based on ordinary differential equations. \emph{Forward CRN bisimulation} identifies a partition where each equivalence class represents the exact sum of the concentrations of the species belonging to that class. \emph{Backward CRN bisimulation} relates species that have identical solutions at all time points when starting from the same initial conditions. Both notions can be checked using only CRN syntactical information, i.e., by inspection of the set of reactions. We provide a unified algorithm that computes the coarsest refinement up to our bisimulations in polynomial time. Further, we give algorithms to compute quotient CRNs induced by a bisimulation. As an application, we find significant reductions in a number of models of biological processes from the literature. In two cases we allow the analysis of benchmark models which would be otherwise intractable due to their memory requirements.	
    \end{abstract}

	\section{Introduction}\label{sec:intro}
	
	At the interface between computer science and systems biology is the idea that biological systems can be interpreted as computational processes~\cite{export:68022,Cardelli:2012aa}, leading to a number of formal methods  applied to study biomolecular systems~\cite{Blinov22112004,Danos200469,Heath2008239}. In this context, chemical reaction networks (CRNs), a popular mathematical model of interaction in natural sciences, can also be seen as a kernel concurrent language for natural programming.
	
	In this paper we present, for the first time to our knowledge, quantitative bisimulation equivalences for CRNs with the well-known interpretation based on ordinary differential equations (ODEs). (To make the paper self-contained, all background is given in Section~\ref{sec:background}.) In this semantics, each species is associated with an ODE giving the deterministic evolution of its concentration starting from an initial condition. Our bisimulations are equivalences over species that induce a reduced CRN that exactly preserves the dynamics of the original one. This is an important goal, especially in order to cope with the potentially very large number of species and reactions in many biological networks~\cite{16430778,DBLP:conf/lics/DanosFFHK10}.
	
	We study two equivalences, developed in the Larsen-Skou style of probabilistic bisimulation~\cite{Larsen19911}, that are based on two distinct ideas of observable behavior. \emph{Forward CRN bisimulation} yields an aggregated ODE where the solution gives the exact sum of the concentrations of the species belonging to each equivalence class. In \emph{backward CRN bisimulation}, instead, equivalent species have the same solution \emph{at all time points}; in other words, backward CRN bisimulation relates species whose ODE solutions are equal whenever they start from identical initial conditions. The use of ``forward'' and ``backward'' has a long tradition in models of computation based on labelled transition systems~\cite{De-Nicola:1990aa}. In the case of quantitative variants, for instance those defined for process algebra with a continuous-time Markov chain (CTMC) semantics~\cite{hermanns:mtipp,pepa,bucholz94papm,Bernardo:2007ij}, forward bisimulations are equivalences that induce a CTMC aggregation in the sense of ordinary lumpability~\cite{BuchholzOrdinaryExact}, where the probability of an equivalence class is equal to the sum of the probabilities of the states belonging to that class. This is found by checking conditions on the \emph{outgoing transitions} of related states in the transition diagram. A backward bisimulation induces a CTMC aggregation in the sense of weak lumpability~\cite{Feret2012137}, where all states in the same equivalence class have a time-invariant conditional probability distribution; exact lumpability is a special case where the conditional probability distribution is uniform, in the sense that any two states of each equivalence class have the same probability at any point in time whenever they have the same initial probabilities. It is found by relating states according to conditions on  their \emph{predecessor states}~\cite{Feret2012137,1703385,BuchholzOrdinaryExact}.
	
	
	Despite being similar in spirit, technically our bisimulations are fundamentally different for two reasons. First, they concern a continuous-state semantics based on ODEs instead of a discrete-state CTMC. Second, they operate at the structural, syntactical level, because they are defined with quantities that can be computed by only inspecting the reactions of a CRN. Still, the repercussions of our bisimulations on the semantics are explained according to certain theories of aggregation. In particular, forward CRN bisimulation yields an aggregated system in the sense of ODE lumpability~\cite{LiRabitz1997,okino1998}. This theory covers linear transformations of the original state variables in general; here we consider an instance, which we call \emph{ordinary fluid lumpability}, where the transformation is induced by a partition of state variables. (Forward bisimulation is presented in Section~\ref{sec:ofsb}.) Backward bisimulation (presented in~Section~\ref{sec:efsb}) is related to \emph{exact fluid lumpability}, introduced in the context of process algebra with fluid semantics~\cite{concur2012}, identifying process terms with the same ODE solution when initialized equally.
	The disadvantage of forward CRN bisimulation is that it is lossy (yet exact) because, similarly to the forward stochastic analogues, from the aggregated ODE system in general it is not possible to recover the solutions for the individual species within the same equivalence class. On the other hand, it does not impose restrictions on the initial conditions, which instead are present in our backward variant. As a further important difference, forward CRN bisimulation (again, like its stochastic analogues) turns out to be a sufficient condition for ODE lumpability. Instead, backward CRN bisimulation enjoys a full characterization, in the sense that there exists a backward CRN bisimulation between two species if and only if they have the same ODE solutions (provided that they start from  equal initial conditions). More in general, by means of a number of examples we will show that the two equivalences are complementary because not comparable. In other words, there exist models that can be reduced up to forward CRN bisimulation but not by the backward variant, and vice versa; at the same time, there are models that can be reduced by both.
	
	To enhance the usefulness of these notions, we present (in Section~\ref{sec:pr}) a  \emph{template} partition-refinement algorithm that is parametric with respect to the  equivalence of interest, computing the coarsest refinement up to either variant in polynomial time. 
	To use our equivalences as an automatic model reduction tool, we further give two algorithms (in Section~\ref{sec:reducedCRN}) that provide the quotient CRN induced by either bisimulation.  With a prototype implementation available at \url{http://sysma.imtlucca.it/crnreducer/}, we show (in Section~\ref{sec:caseStudies}) that we are able to reduce  a number of case studies taken from the literature. Our bisimulations yielded quotient CRNs with number of reactions and species up to four orders of magnitude smaller than the original CRNs, leading to speed-ups in the ODE solution runtimes of up to five orders of magnitude. In two cases, it was possible to analyze models that were otherwise intractable directly within our experimental environment due to excessive memory requirements.

	\noindent {\bfseries Related work.}  Behavioral equivalences have been recently proposed in~\cite{Shin14} for comparing CRNs; however, the analysis is carried out at the qualitative level, i.e., ignoring the dynamical evolution.
	In~\cite{concur2012} is introduced the notion of \emph{label equivalence} for process algebra with fluid semantics, which captures exact fluid lumpability (processes are equivalent whenever their ODE solutions are equal at all time points). However, unlike backward CRN bisimulation, label equivalence is only a sufficient condition for ODE reduction. Indeed, it works at a coarser level of granularity as it relates \emph{sets} of ODE variables, each corresponding to the behavior of a sequential process. Instead, backward CRN bisimulation relates individual ODE variables. Further, the conditions for equivalence, specific to the process algebra, are difficult to check automatically because of the universal quantifiers over the ODE variables. More important, no algorithm for computing the coarsest partition was developed. Similar considerations apply to the process-algebra specific  ordinary fluid lumpability in~\cite{jlamp14}.
	
	Cardelli's notion of \emph{emulation} between two CRNs is a (structural) mapping of species and reactions that, like backward CRN bisimulation,  guarantees the equality between the ODE solutions at all time points~\cite{CardelliCRN}. An emulation requires a source and a target CRN --- the modeler is intended to have the suspicion that, for some given CRN, another CRN might be related to it. But emulation cannot be used when one wants to discover equivalences between species \emph{within the same given CRN}. Thus, emulation is not useful for model reduction because a-priori information about the structure of a quotient CRN is not available. Furthermore, no algorithm is provided in~\cite{CardelliCRN} to find emulations automatically.   Since backward CRN bisimulation fully characterizes exact fluid lumpability, it is not difficult to show that backward CRN bisimulation generalizes emulation in the sense that any emulation between two CRNs can be understood in terms of a backward CRN bisimulation over the species of a ``union CRN'' that contains all the reactions of the two CRNs of interest.
	
	
	Model reductions have been studied in related models for biomolecular networks (e.g.~\cite{DBLP:conf/lics/DanosFFHK10,Feret_IJSI2013,DBLP:journals/entcs/CamporesiFKP10}), most notably for rule-based systems such as BioNetGen~\cite{Blinov22112004} and the $\kappa$ calculus~\cite{Danos200469}. These offer an intensional modeling approach, by providing graph-rewrite rules of interaction instead of a complete enumeration of all chemical reactions involved.
	\emph{Differential fragments} for $\kappa$ are self-consistent aggregates found by a static analysis on the model,  identifying sums of chemical species for which an ODE system can be explicitly written~\cite{DBLP:conf/lics/DanosFFHK10}. In this sense, they are analogous to our CRN bisimulations, but with notable differences. First, fragmentation works directly at the rule-based level. This has the advantage that the analysis is performed on a set of rewrite rules, which is typically much more compact than the fully enumerated CRN. However, fragmentation is domain-specific, hence the model must be conveniently expressed as a biomolecular system (e.g., with complex formation or internal state modification). On the other hand, CRN bisimulations work for a generic language-independent CRN, which however must be explicitly given. Further, unlike CRN bisimulations, fragmentation is performed on  a ``static'' view of the model, without information on the reaction rates.
	The ODE aggregations of both forward CRN bisimulation and fragmentation introduce loss of information (in contrast to backward CRN bisimulation).
	But, unlike our forward variant, in fragmentation the same species may be present in more than one fragment.
	Additionally, species may occur in fragments with multiplicity numbers.
	Thus, fragmentation can be seen as a form of \emph{improper lumping} that is not necessarily induced by a partition of the original state-space variables~\cite{okino1998}.
	%
	%
	Overall, because of these differences, it is not difficult to find models that can be reduced by our CRN bisimulations and not by fragmentation, and vice versa.
	This is presented in detail in Section~\ref{sec:caseStudies}.

	\section{Background}\label{sec:background}
	\subparagraph*{Notation.}
	We write $A \to B$ and $B^A$ for the functions from $A$ to $B$. When $f \in A \to B$ and $a \in A$, we set $f_a := f(a)$. Moreover, for any $X \subseteq A$ and $b \in B$, we define $f(X) := \{ b \in B \mid \exists a \in X . (f(a) = b) \}$. 
	Sets and multisets are denoted by $\{ \ldots \}$ and $\{| \ldots |\}$, respectively. Also, we shall not distinguish among an equivalence relation and the partition induced by it, and shall use the symbol $\sim_\pa$ to denote the equivalence relation with $\pa = S \!/\!\!\sim_\pa$.
	Finally, given two partitions $\mathcal{H}_1$ and $\mathcal{H}_2$ of a given set $S$, we say that $\mathcal{H}_1$ is a \emph{refinement} of $\mathcal{H}_2$ if for any $H_1 \in \mathcal{H}_1$ there exists a (unique) $H_2 \in \mathcal{H}_2$ such that $H_1 \subseteq H_2$.

	\subsection{Chemical Reaction Networks}\label{sec:crn}
	Formally, a CRN $(S, R)$ is a pair consisting of a finite set of species $S$ taken from a countable infinite universe of all species,
	and a finite set of chemical reactions $R$. A reaction is a triple written in the form $\rho \act{\rate} \pi$, where $\rho$ and $\pi$ are the multisets of species 
	\emph{reactants} and \emph{products}, respectively, and $\rate > 0$ is the reaction rate.
	In particular, we focus on basic chemistry where only \emph{elementary reactions} are considered, where at most two reactants (possibly of the same species) interact. No restrictions are instead imposed on products. Several models found in the literature (including those discussed in Section~\ref{sec:caseStudies}) belong to this class. Also, this is consistent with the physical considerations which stipulate that reactions with more than two reactants are very unlikely to occur in nature~\cite{citeulike:1079741}.
	We denote by $\rho(X)$ the multiplicity of species $X$ in the multiset $\rho$, and by $\ms(S)$ the set of finite multisets of species in $S$. To adhere to standard chemical notation, we shall use the operator $+$ to denote multiset union, e.g., $X + Y + Y$ (or just $X + 2Y$) denotes the multiset $\multiset{X,Y,Y}$.
	We may also use $X$ to denote either the species $X$ or the singleton $\multiset{X}$.

	The (autonomous) ODE system $\dot{V} = F(V)$ underlying a CRN $(S,R)$ is 
	$F : \mathbb{R}^S_{\geq0} \to \mathbb{R}^S$, where each component $F_X$, with $X\in S$ is defined as:
	\[F_X(V) := \sum_{\rho \act{\alpha}\pi \in R} (\pi(X)-\rho(X))\cdot\alpha\cdot\prod_{Y \in S} V_Y^{\rho(Y)}\ .\]
	This represents the well-known \emph{mass-action} kinetics, where the reaction rate is proportional to the concentrations of the reactants involved.
	Since the ODE system of a CRN is given by polynomials, the vector field $F$ is locally Lipschitz. Hence, the theorem of Picard-Lindel\"{o}f ensures that for any $V(0) \in \RE_{\geq0}^S$ there exists a unique non-continuable solution of $\dot{V} = F(V)$.
	
	\begin{example}\label{ex:model}
		We now provide a simple CRN, $(S_e, R_e)$, with $S_e = \{ A,B,C,D,E\}$ and
		$R_e \! = \! \{ A\!\act{6}\!E, B\!\act{6}\!D, {A\!+\!B}\!\act{2}\!{C}, {C\!+\!D}\!\act{5}\!{2C\!+\!D}, {E\!+\!D}\!\act{5}\!{2E\!+\!D} \}$,
		which will be used as a running example throughout the paper. Its ODE system is given by
		\begin{align}
			\dot{V}_{A} & = - 6 \, V_{A} - 2 \, V_{A}\,V_{B}  & \dot{V}_{B}  & = - 6 \, V_{B} - 2 \, V_{A}\,V_{B} &  \dot{V}_{C} & = 2 \, V_{A} \, V_{B} + 5 \, V_{C} \, V_{D}
			\nonumber \\
			\dot{V}_{D} & =  6 \, V_{B}  & \dot{V}_{E}  & = 6 \, V_{A}  + 5\, V_{E} \, V_{D}
			\nonumber
		\end{align}
	\end{example}
	
	In the following, we shall assume that the
	universe of all species is well-ordered with respect to $\sqsubseteq$. We then say that a function $\mu : S \to S$ is a \emph{choice function} of a partition $\pa$ of $S$, if $\mu(X) = \min_\sqsubseteq H$ for all $H \in \mathcal{H}$ and $X \in H$. Also, choice functions can be trivially lifted to multisets applying them element-wise, e.g., $\mu(X+Y) = \mu(X)+\mu(Y)$.
	
	\subsection{Fluid Lumpability}
	
	\subparagraph*{Ordinary Fluid Lumpability.}
	We start by defining the notion of ordinary fluid lumpability, which is an instance of \emph{ordinary lumpability} for ODEs~\cite{LiRabitz1997} specialized to CRNs.

	
	\begin{definition}[Ordinary fluid lumpability]\label{def:o-lumpability}
		Let $(S,R)$ be a CRN, $F$ be its vector field, and $\pa = \{H_1,\ldots,H_m\}$ a partition of $S$. Then, $\mathcal{H}$ is \emph{ordinary fluid lumpable} if for all $H \in \pa$ there exists a polynomial $\wp_H$ in $|\pa|$ variables such that $\sum_{X \in H} F_X(V) = \wp_H(\sum_{X \in H_1} V_X, \ldots, \sum_{X \in H_m} V_X)$ for all $V \in \RE_{\geq0}^S$.
	\end{definition}
	Informally, a partition $\pa$ is ordinary fluid lumpable if, for each $H \in \pa$, the polynomial $\sum_{X \in H} F_X(V)$ in the variables $\{V_X \mid X \in S\}$ can be rewritten into a polynomial $\wp_H$ in the variables $\{ \sum_{X \in H} V_X \mid H \in \pa\}$. In particular, if $\pa$ is known to be an ordinary fluid lumpable partition of $(S,R)$ and $V$ denotes the solution of $\dot{V} = F(V)$ subject to $V(0) \in \RE_{\geq0}^S$, the solution of the aggregated ODE system $(\dot{W}_{H_1}, \ldots, \dot{W}_{H_m}) = (\wp_{H_1}(W),\ldots,\wp_{H_m}(W))$ with $W_H(0) = \sum_{X \in H} V_X(0)$ is such that $W_H(t) = \sum_{X \in H} V_X(t)$ for all $t \in \text{domain}(V)$.
	
	\begin{example}\label{ex:OlumpedODEs}
		Consider the ODEs of $(S_e,R_e)$ of Example~\ref{ex:model}, and let 
		$\pa_{O}=\{\{A\},\{B\},\{C,E\},$ $\{D\}\}$.
		By applying a variable renaming consistent with the blocks of $\pa_O$, i.e., $V_{CE}=V_C+V_E$, and by exploiting the linearity of the differential operator we get
		\begin{align*}\!\!\!
			\dot{V}_{A} \!=\! - 6 V_{A} \!-\! 2 V_{A}V_{B}  \quad \
			\dot{V}_{B} \! = \!- 6 V_{B} \!-\! 2 V_{A}V_{B} \quad \
			\dot{V}_{CE} \!= \!2V_{A}V_{B} \!+\! 6V_{A} \!+\! 5V_{D}V_{CE} \quad \
			\dot{V}_{D} \!= \! 6V_{B}
		\end{align*}
		That is, we obtained an ODE system  in terms of block variables only. \qed
	\end{example}

	
	\subparagraph*{Exact Fluid Lumpability.}
	\!We extend to CRNs the notion of exact fluid lumpability in~\cite{concur2012}.
	
	\begin{definition}[Exact fluid lumpability]\label{def:e-lumpability}
		Let $(S,R)$ be a CRN, $F$ its vector field, and $\mathcal{H}$ a partition of $S$. We call $V \!\in\! \mathbb{R}^{S}$ \emph{constant on} $\mathcal{H}$ if $V_{X_i} = V_{X_j}$ for all $H \in \mathcal{H}$, and all $X_i,X_j \in H$.
		Then, $\mathcal{H}$ is \emph{exactly fluid lumpable} if $F(V)$ is constant on $\mathcal{H}$ whenever $V$ is constant on $\mathcal{H}$.
	\end{definition}
	\begin{example}\label{ex:ElumpedODEs}
		Consider the ODEs of $(S_e,R_e)$ of Example~\ref{ex:model},
		and let 
		$\pa_{E}=\{\{A,B\},\{C\},\{D\},$ $\{E\}\}$.
		It is easy to see that $A$ and $B$ 
		have same concentrations at all time points if initialized equally.
		In these cases, we can replace the ODEs of $(S_e,R_e)$
		with the ones aggregated according to $\pa_{E}$, obtained by removing $\dot{V}_{B}$ and replacing all occurrences of $V_{B}$ with $V_{A}$:
		\begin{align*}
			\dot{V}_{A} = - 6 \, V_{A} - 2 \, V_{A}\,V_{A}    \quad \
			\dot{V}_{C} = 2 \, V_{A} \, V_{A} + 5 \, V_{C} \, V_{D}   \quad \
			\dot{V}_{D} =  6 \, V_{A}    \quad \
			\dot{V}_{E} = 6 \, V_{A}  + 5\, V_{E} \, V_{D}
		\end{align*}
		That is, we obtained a (lossless) aggregated ODE system written	in terms of a variable per block, chosen according to $\sqsubseteq$.
		%
		\qed
	\end{example}
	

	
	We remark that the above definition expresses exact fluid lumpability in terms of properties of the ODE vector field of a CRN. Instead,  in~\cite{concur2012}  exactly fluid lumpability was defined directly in terms of the desired dynamical property, i.e., that the ODE solutions within any equivalence class be equal at all time points. The following result is a new contribution showing that this dynamical property is fully characterized by the vector-field based definition.
	
	\begin{theorem}\label{thm_efl}
		Let $(S,R)$ be a CRN and $F$ its vector field. A partition $\mathcal{H}$ of $S$ is exactly fluid lumpable if and only if, for any $V(0) \in \RE_{\geq0}^S$ that is constant on $\pa$, the underlying solution of $\dot{V} = F(V)$ is such that $V(t)$ is constant on $\pa$ for all $t \in \text{domain}(V)$.
		\footnote{
			\messageAboutProofs
		}
	\end{theorem}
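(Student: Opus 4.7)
The plan is to prove each direction separately, using Picard--Lindel\"{o}f uniqueness for the harder direction.

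For the \textbf{(if)} direction, fix any $V^\ast \in \RE_{\geq0}^S$ that is constant on $\pa$ and let $V$ denote the unique solution of $\dot V = F(V)$ with $V(0) = V^\ast$. By hypothesis $V(t)$ is constant on $\pa$ throughout $\text{domain}(V)$, so for every $H \in \pa$ and every $X_i, X_j \in H$ the scalar functions $t \mapsto V_{X_i}(t)$ and $t \mapsto V_{X_j}(t)$ coincide on a neighbourhood of $0$. Differentiating at $t = 0$ yields $F_{X_i}(V^\ast) = \dot V_{X_i}(0) = \dot V_{X_j}(0) = F_{X_j}(V^\ast)$, so $F(V^\ast)$ is constant on $\pa$, establishing Definition~\ref{def:e-lumpability}.

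For the \textbf{(only if)} direction, the plan is to construct a reduced vector field on the quotient space and then invoke uniqueness. Let $\mu$ be a choice function for $\pa$, define the lift $\ell : \RE^{\pa} \to \RE^S$ by $\ell(W)_X := W_H$ whenever $X \in H$, and put $G : \RE^{\pa} \to \RE^{\pa}$, $G_H(W) := F_{\mu(H)}(\ell(W))$. Since $\ell(W)$ is constant on $\pa$, exact fluid lumpability gives $F_{\mu(H)}(\ell(W)) = F_X(\ell(W))$ for every $X \in H$; this is the crucial observation that simultaneously makes $G$ independent of the choice function and reconciles the reduced dynamics with the full one. Because $G$ is polynomial and hence locally Lipschitz, the Cauchy problem $\dot W = G(W)$ with $W_H(0) := V_{\mu(H)}(0)$ has a unique non-continuable solution $W$.

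Now set $\tilde V(t) := \ell(W(t))$. Since $V(0)$ is constant on $\pa$ we have $\tilde V(0) = V(0)$, and for each $X \in H$
\[ \dot{\tilde V}_X(t) = \dot W_H(t) = F_{\mu(H)}(\ell(W(t))) = F_X(\tilde V(t)), \]
where the last equality again uses exact fluid lumpability applied to the lifted (hence $\pa$-constant) argument. Thus $\tilde V$ solves the same initial value problem as $V$, and Picard--Lindel\"{o}f forces $V \equiv \tilde V$ on the common domain, whence $V(t) = \ell(W(t))$ is constant on $\pa$ for all $t$. The main obstacle is precisely this only-if direction: the lifted vector field must be designed so that the pointwise vector-field statement of exact fluid lumpability supplies both the well-definedness of the quotient dynamics and the fact that the lifted trajectory genuinely solves the original ODE, at which point uniqueness closes the argument.
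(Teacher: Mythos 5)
Your proof is correct and takes essentially the same route as the paper's: the paper likewise obtains lumpability from trajectory invariance by differentiating the coinciding components at $t = 0$ (via a first-order Taylor expansion), and proves the converse by defining the quotient vector field $G_{\hat{X}}(\hat{V}) := F_{\hat{X}}(\hat{V} \circ \mu)$ on the representatives $\mu(S)$, lifting its solution along $\mu$, and invoking uniqueness of solutions of $\dot{V} = F(V)$. The only cosmetic difference is that the paper indexes the reduced system by $\mu(S)$ rather than by $\pa$, and, like you, it leaves implicit the routine check that the lifted solution covers the entire domain of the non-continuable solution.
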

	
	\section{CRN Bisimulations}\label{sec:sb}
	
	Both notions of fluid lumpability given in 
	Section~\ref{sec:background} are not convenient to be used directly because they involve a universal quantifier over the  (uncountable) state space.  We address this problem by providing structural conditions that concern only the reactions of a CRN. 
	

	\subsection{Forward CRN Bisimulation}\label{sec:ofsb}

	We now introduce forward CRN bisimulation, an equivalence on species that will turn out to induce ordinary fluid lumpability.
	We start with the notions of \emph{reaction} and \emph{production rate}.
	The former collects the rates at which the concentration of a species  $X$ decreases when reacting with a given partner.
	The latter collects the positive contribution that $X$ exerts to the concentration of a species $Y$, again when reacting with a certain partner.
	%
	
	
	\begin{definition}[Reaction and production rates]\label{def:conditionalReactionRate}
		Let $(S,R)$ be a CRN, $X,Y \in S$, and $\rho\in \ms(S)$.
		The \emph{$\rho$-reaction rate} of $X$, and the \emph{$\rho$-production rate} of $Y$\!\!-elements by $X$ are defined respectively as
		\begin{align*}
		\crr[X,\rho] & := (\rho(X)+1) \!\!\!\!\!\!  \sum_{X + \rho \act{\rate} \pi \in R} \!\!\!\!\!\! \rate , &
		\gr(X,\rho,Y) & := (\rho(X)+1) \!\!\!\!\!\! \sum_{X + \rho \act{\rate} \pi \in R} \!\!\!\!\!\! \rate \cdot \pi(Y)
		\end{align*}
		\noindent Finally, for $\Y \subseteq S$ we define  $\gr[X,\rho,\Y] := \sum_{Y\in \Y} \gr(X,\rho,Y) .$
	\end{definition}
	
	\begin{definition}[Forward CRN Bisimulation]\label{def:dsb}
		Let $(S,R)$ be a CRN, $\R$ an equivalence relation over $S$ and $\pa = S / \R$. Then, $\R$ is a forward CRN bisimulation (abbreviated \FB{}) if for all $(X,Y) \in \R$, all $\rho \in \MS(S)$, and all $\Y \in \pa$ it holds that
		\begin{equation}\label{eq_i_and_ii}
		\crr[X,\rho]= \crr[Y,\rho] \quad \text{and} \quad
		\gr[X,\rho,\Y] = \gr[Y,\rho,\Y]
		\end{equation}
		%
	\end{definition}
	
	
	
	\begin{example}\label{ex:dsb}
		Consider $\pa_{O}=\{\{A\},\{B\},\{C,E\},\{D\}\}$ of Example~\ref{ex:OlumpedODEs}.
		It can be shown that $\pa_{O}$ is an \FB{}, as, e.g., $\crr[C,\!D]\!=\!\crr[E,\!D]\!=\!5$, and $\gr[C,\!D,\{C\!,\!E\}]\!=\!\gr[E,\!D,\{C\!,\!E\}]\!=\!10$.
	\end{example}

	We are interested in the coarsest \FB{}, as well as in the coarsest one refining a given initial partition of species. 

	\begin{proposition}\label{prop:cofsb:ex}
		Let $(S,R)$ be a CRN, $I$ a set of indices, and $\R_i$ an \FB{} for $(S,R)$, for all $i \in I$. The transitive closure of their union $\R \!=\! (\bigcup_{i \in I}\R_i)^*$ is an \FB{} for $(S,R)$. In particular, if each $R_i$ is such that $S / \R_i$ refines some partition $\mathcal{G}$ of $S$, then so does $S / \R$.
	\end{proposition}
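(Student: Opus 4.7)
The plan is to treat the two claims separately, handling the bisimulation property first and then the refinement compatibility. Before either, I would check that $\R$ is actually an equivalence relation: each $\R_i$ is reflexive and symmetric, hence so is $\bigcup_{i\in I} \R_i$, and the transitive closure supplies transitivity without disturbing the other two properties. A small observation I will invoke repeatedly is that $\R_i \subseteq \R$ for every $i$, so every block $\Y \in S/\R$ is a disjoint union of blocks of $\R_i$, for each individual $i \in I$.

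For the FB conditions, I would fix $(X,Y) \in \R$, a multiset $\rho \in \MS(S)$, and a block $\Y \in S/\R$, and use the definition of the transitive closure to obtain a finite witness chain $X = Z_0, Z_1, \ldots, Z_n = Y$ with $(Z_{k-1},Z_k) \in \R_{i_k}$ for some index $i_k \in I$. The reaction-rate equality $\crr[X,\rho] = \crr[Y,\rho]$ then follows by applying the FB property of each $\R_{i_k}$ to adjacent terms in the chain and telescoping. For the production-rate equality, the key step is to decompose $\Y$ as a disjoint union of $\R_{i_k}$-blocks, $\Y = \bigsqcup_j \Y_j^{(k)}$: the FB hypothesis for $\R_{i_k}$ gives $\gr[Z_{k-1},\rho,\Y_j^{(k)}] = \gr[Z_k,\rho,\Y_j^{(k)}]$ for every $j$, summing over $j$ yields $\gr[Z_{k-1},\rho,\Y] = \gr[Z_k,\rho,\Y]$, and chaining over $k$ closes the argument.

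The refinement claim is then short: if each $S/\R_i$ refines $\mathcal{G}$, the relation ``belonging to a common block of $\mathcal{G}$'' is an equivalence on $S$ that contains $\bigcup_{i\in I} \R_i$, hence it also contains its transitive closure $\R$, so $S/\R$ refines $\mathcal{G}$. The step I expect to require the most care is the decomposition of $\R$-blocks into $\R_{i_k}$-blocks used in the production-rate part: without it the FB hypothesis for $\R_{i_k}$ cannot be invoked directly, because $\Y$ itself need not be a block of $\R_{i_k}$. Conceptually this observation is elementary, but it is what bridges the ``coarse'' target $\Y$ to the ``finer'' FB hypothesis at each link of the chain, and it is the only place where anything beyond routine equivalence-closure bookkeeping is needed.
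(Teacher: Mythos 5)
Your proof is correct and follows essentially the same route as the paper's: establish that $\R$ is an equivalence relation, take a finite witness chain through the $\R_{i}$'s, telescope the $\crr$ and $\gr$ equalities along it, and use the fact that each block of $S/\R$ decomposes into blocks of $S/\R_i$ to transfer the production-rate condition from the finer partitions to the coarser one. The only (cosmetic) difference is in the refinement claim, where you argue via closure of the $\mathcal{G}$-induced equivalence under transitive closure while the paper's auxiliary lemma runs an explicit induction along the chain; both arguments are equivalent.
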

	
	
	\begin{theorem}[Forward bisimulation implies ordinary fluid lumpability]\label{th:dsbAndODELumping}
		Let $(S,R)$ be a CRN. Then, $\pa$ is an ordinarily fluid lumpable partition of $S$ if $\pa$ is an \FB{} of $S$.
	\end{theorem}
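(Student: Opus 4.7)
The plan is to exploit the elementary-reaction assumption so that $\sum_{X\in H}F_X(V)$ is a polynomial of degree at most two in $\{V_Y\}_{Y\in S}$, read off each of its monomial coefficients in terms of $\crr$ and $\gr$, and then use the \FB{} conditions~\eqref{eq_i_and_ii} to show that the monomials sharing the same block pattern collapse into monomials in the block sums $W_K:=\sum_{X\in K}V_X$. This will immediately yield the required polynomial $\wp_H$ in the variables $(W_K)_{K\in\pa}$.

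First I would write
\[
\sum_{X\in H}F_X(V) \;=\; \sum_{\rho\act{\alpha}\pi\in R}(\pi(H)-\rho(H))\,\alpha\prod_{Y\in S} V_Y^{\rho(Y)}
\]
and stratify the sum by $|\rho|\in\{0,1,2\}$. The $|\rho|=0$ summands contribute a constant. For $|\rho|=1$ the coefficient of $V_Z$ equals $\gr[Z,\emptyset,H]-[Z\in H]\crr[Z,\emptyset]$, which by~\eqref{eq_i_and_ii} applied with $\rho=\emptyset$ and $\Y = H$ depends only on the block $K$ containing $Z$, so summation over $Z\in K$ turns the linear part into $\sum_K c_{H,K}W_K$.

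The quadratic part is the substantive one. For distinct $Z_1,Z_2$ the coefficient of $V_{Z_1}V_{Z_2}$ is $\gr[Z_1,Z_2,H]-([Z_1\in H]+[Z_2\in H])\crr[Z_1,Z_2]$, while the coefficient of $V_Z^2$ is $\tfrac{1}{2}\gr[Z,Z,H]-[Z\in H]\crr[Z,Z]$, the $\tfrac12$ coming from $\rho(Z)+1=2$ in Definition~\ref{def:conditionalReactionRate}. Combining~\eqref{eq_i_and_ii} with the trivial symmetry $\gr[Z_1,Z_2,H]=\gr[Z_2,Z_1,H]$ and $\crr[Z_1,Z_2]=\crr[Z_2,Z_1]$ for $Z_1\neq Z_2$ (both sides count the same reactions with reactant multiset $\{Z_1,Z_2\}$), I would show that the coefficient of any cross-block monomial $V_{Z_1}V_{Z_2}$, with $Z_i\in K_i$ and $K_1\neq K_2$, depends only on the unordered pair $(K_1,K_2)$; these contributions therefore aggregate into $a_{H,K_1,K_2}W_{K_1}W_{K_2}$.

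The main obstacle is the same-block case. For distinct $Z_1,Z_2\in K$, applying~\eqref{eq_i_and_ii} with $X=Z_1$, $Y=Z_2$ and $\rho=Z_2$ yields $\crr[Z_1,Z_2]=\crr[Z_2,Z_2]$ and $\gr[Z_1,Z_2,H]=\gr[Z_2,Z_2,H]$; unwinding the definitions (and using the factor $\rho(Z_2)+1=2$ on the right) shows that the coefficient of the cross monomial $V_{Z_1}V_{Z_2}$ is exactly twice the coefficient of any same-block square $V_Z^2$. Calling this common per-square coefficient $b_{H,K}$, the elementary identity $W_K^2=\sum_{Z\in K}V_Z^2+2\sum_{\{Z_1,Z_2\}\subseteq K,\,Z_1\neq Z_2}V_{Z_1}V_{Z_2}$ collapses the entire same-block quadratic contribution into $b_{H,K}W_K^2$. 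Assembling the constant, linear, same-block, and cross-block pieces produces
\[
\wp_H(W) \;=\; c_0 \;+\; \sum_{K\in\pa} c_{H,K}\,W_K \;+\; \sum_{K\in\pa} b_{H,K}\,W_K^2 \;+\!\!\!\!\sum_{\{K_1,K_2\}\subseteq\pa,\,K_1\neq K_2}\!\!\!\! a_{H,K_1,K_2}\,W_{K_1}W_{K_2},
\]
a polynomial in $(W_K)_{K\in\pa}$ witnessing the ordinary fluid lumpability of $\pa$ in the sense of Definition~\ref{def:o-lumpability}.
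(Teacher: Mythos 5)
Your proof is correct and follows essentially the same route as the paper's: the paper also stratifies $\sum_{X\in H}F_X(V)$ by reaction arity (unary vs.\ binary reactants), uses the symmetry $\crr(X,Y)=\crr(Y,X)$ and $\gr(X,Y,Z)=\gr(Y,X,Z)$ together with the \FB{} conditions to make coefficients block-dependent, and handles the factor of $2$ arising from reactants of the form $2Z$ versus $Z_1+Z_2$. The only presentational difference is that the paper organizes the computation as separate aggregated accretion and depletion sums (Proposition~\ref{prop:aggregDeplAndAccr}) and absorbs the diagonal/off-diagonal bookkeeping into a $\tfrac12\sum_Y\sum_{Y'}$ reindexing, whereas you match monomial coefficients of the net polynomial directly and collapse the same-block quadratic part via the $W_K^2$ expansion; the underlying ingredients are identical.
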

	\FB{} is only a sufficient condition for lumpability, as discussed in the next example. (However, Section~\ref{sec:caseStudies} shows that \FB{} can be effectively applied to interesting existing models.)
	
	\begin{example}\label{ex:sufficientForLumpingODEs}
		Consider the CRN $(\{F,G\},\{F\act{\rate_1}G,G\act{\rate_2}F\})$, having ODEs
		\begin{align*}
		\dot{V}_{F}=-\rate_1\,V_F+\rate_2\,V_G\qquad\dot{V}_{G}=-\rate_2\,V_G+\rate_1\,V_F
		\end{align*}
		%
		If $\alpha_1\not=\alpha_2$, $\pa_c=\{\{G,F\}\}$ is not an \FB{}, as $\crr[F,\emptyset]=\alpha_1$ and $\crr[G,\emptyset]=\alpha_2$. Nevertheless, the above ODE system is lumpable. Indeed, by applying the variable renaming consistent with $\pa_c$, i.e., $V_{FG}=V_F+V_G$, we get a single ODE for $V_{FG}$, i.e., $\dot{V}_{FG}=0$.
		\qed
	\end{example}

	\subsection{Backward CRN Bisimulation}\label{sec:efsb}
	
	We now introduce backward CRN bisimulation, an equivalence on species that will turn out to characterize exact fluid lumpability.
	We start with the notion of cumulative flux rate, which collects the overall contribution that reactions with a given multiset of reactants $\rho$ exert to the concentration of a species $X$.
	
	\begin{definition}[Cumulative flux rate]\label{def:cumulativeFluxRate}
		Let $(S,R)$ be a CRN, $X \in S$, $\rho \in \ms(S)$, and $\M\subseteq\ms(S)$. Then, we define 
		\begin{align*}
		\fr(X,\rho) & := \sum_{\rho \act{\alpha} \pi \in R} (\pi(X)-\rho(X)) \cdot \alpha,  &
		\fr[X,\M] & := \sum_{\rho\in\M}\fr(X,\rho) .
		\end{align*}
		We call $\fr(X,\rho)$ and $\fr[X,\M]$ \emph{$\rho$-flux rate} and \emph{cumulative $\M$-flux rate} of $X$, respectively.
	\end{definition}
	
	
	%

	\begin{definition}[Backward CRN bisimulation]\label{def:efsb}
		Let $(S,R)$ be a CRN, $\R$ an equivalence relation over $S$, $\pa = S / \R$ and $\mu$ the choice function of $\pa$. Then, $\R$ is a backward CRN bisimulation (\BB{}) if for any $(X,Y) \in \R$ it holds that
		\begin{equation}\label{eq_iii}
		\fr[X,\M] = \fr[Y,\M] \ \ \text{for all } \ \ \M \in \{ \rho \mid \rho \act{\alpha} \pi \in R \} / \approx_\pa ,
		\end{equation}
		where any two $\rho,\sigma \in \ms(S)$ satisfy $\rho \approx_{\pa} \sigma$ if $\mu(\rho) = \mu(\sigma)$.
	\end{definition}
	
	
	\begin{example}\label{ex:efsb}
		Consider $\pa_{E}=\{\{A,B\},\{C\},\{D\},\{E\}\}$ of Example~\ref{ex:ElumpedODEs}.
		We first  note that $\multiset{A} \approx_{\pa_{E}} \multiset{B}$, as $\approx_{\pa_{E}}$ relates multisets 
		with same number of $\pa_{E}$-equivalent species.
		Also, it can be shown that $\pa_{E}$ is a \BB{}, as, e.g., $\fr[A,\M]=\fr[B,\M]=-6$ for $\M=\{\multiset{A},\multiset{B}\}$. 
		%
		\qed
	\end{example}
	
	As for \FB{}, there exists a coarsest \BB{} (that refines a given partition of $S$).
	
	\begin{proposition}\label{prop:cefsb:ex}
		Let $(S,R)$ be a CRN, $I$ a set of indices, and $\R_i$ a \BB{} for $(S,R)$, for all $i \in I$. The transitive closure of their union $\R \!=\! (\bigcup_{i \in I}\R_i)^*$ is a \BB{} for $(S,R)$. In particular, if each $R_i$ is such that $S / \R_i$ refines some partition $\mathcal{G}$ of $S$, then so does $S / \R$.
	\end{proposition}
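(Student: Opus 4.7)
The plan is to follow the same pattern as the forward case (Proposition~\ref{prop:cofsb:ex}), but with extra care because the \BB{} condition is indexed by classes of reactant multisets under a partition-dependent equivalence. First I would observe that each $\R_i$ is reflexive and symmetric, so $\bigcup_{i \in I} \R_i$ is reflexive and symmetric, and its transitive closure $\R$ is an equivalence relation on $S$. Setting $\pa := S/\R$ and $\pa_i := S/\R_i$, the partition $\pa$ is coarser than every $\pa_i$: any $\pa_i$-block is contained in some $\pa$-block. The goal is then to verify condition~(\ref{eq_iii}) for $\R$ with respect to the multiset equivalence $\approx_\pa$ induced by the choice function $\mu$ of $\pa$.

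The main technical step is a refinement lemma for multiset equivalences: whenever $\pa_i$ refines $\pa$, the relation $\approx_{\pa_i}$ refines $\approx_\pa$ on $\ms(S)$. To see this, write $\rho = \multiset{X_1,\ldots,X_k}$ and $\sigma = \multiset{Y_1,\ldots,Y_k}$; if $\mu_i(\rho) = \mu_i(\sigma)$ then, as a multiset identity, there is a permutation $\tau$ with $\mu_i(X_j) = \mu_i(Y_{\tau(j)})$ for all $j$, so each $X_j$ and $Y_{\tau(j)}$ lie in the same $\pa_i$-block, hence in the same $\pa$-block, giving $\mu(X_j) = \mu(Y_{\tau(j)})$ and therefore $\mu(\rho) = \mu(\sigma)$. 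As a consequence, any $\M \in \{\rho \mid \rho \act{\alpha}\pi \in R\}/\approx_\pa$ partitions as a disjoint union $\M = \bigsqcup_\ell \M_\ell$ of classes in $\{\rho \mid \rho \act{\alpha}\pi \in R\}/\approx_{\pa_i}$, and by additivity of the summation in Definition~\ref{def:cumulativeFluxRate} one gets $\fr[X,\M] = \sum_\ell \fr[X,\M_\ell]$ for every $X \in S$.

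With this in hand I would conclude by a chain argument. If $(X,Y) \in \R_i$ for some $i$, the \BB{}-condition for $\R_i$ yields $\fr[X,\M_\ell] = \fr[Y,\M_\ell]$ for each sub-class $\M_\ell$, and summing over $\ell$ gives $\fr[X,\M] = \fr[Y,\M]$. For a general $(X,Y) \in \R$, pick a witness chain $X = Z_0, Z_1, \ldots, Z_n = Y$ with consecutive pairs in some $\R_{i_j}$, and compose the equalities. The refinement statement with respect to $\mathcal{G}$ follows because ``being in the same block of $\mathcal{G}$'' is itself an equivalence that contains each $\R_i$, hence contains $\R = (\bigcup_i \R_i)^*$. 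The main obstacle is precisely the refinement lemma above: unlike the forward case, where the bisimulation condition quantifies over target blocks $\Y \in \pa$ directly, here the quantification is over $\approx_\pa$-classes of reactant multisets, so one must verify that coarsening the species partition also coarsens this induced multiset equivalence before the additive decomposition of $\fr$ can be applied.
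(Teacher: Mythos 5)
Your proof is correct and follows essentially the same route as the paper's: establish that $\R$ is an equivalence whose partition is coarsened by each $\pa_i$, decompose each $\approx_\pa$-class of reactant multisets into $\approx_{\pa_i}$-classes, use additivity of $\fr$ together with a witness chain, and derive the $\mathcal{G}$-refinement claim separately. The only difference is that you spell out the multiset refinement lemma that the paper dismisses as "easy to see," which is a worthwhile clarification but not a different argument.
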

	
	We now state the mentioned characterization of exact fluid lumpability in terms of \BB{}.
	
	\begin{theorem}[Backward bisimulation characterizes exact fluid lumpability]\label{th:efsbAndODELumping}
		Let $(S,R)$ be a CRN. Then, $\pa$ is an exactly fluid lumpable partition of $S$ if and only if $\pa$ is a \BB{} of $S$.
	\end{theorem}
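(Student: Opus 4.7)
The plan is to reduce both directions to the linear independence of distinct monomials in the reduced variables $\{V_{\mu(H)} \mid H \in \pa\}$, where $\mu$ is the choice function of $\pa$.

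First, I would rewrite the vector field so that its monomial structure becomes explicit under the constraint ``$V$ is constant on $\pa$''. Grouping the sum defining $F_X(V)$ by reactant multiset gives $F_X(V) = \sum_\rho \fr(X,\rho) \prod_Y V_Y^{\rho(Y)}$, where $\rho$ ranges over multisets that appear as the reactant side of some reaction in $R$. When $V$ is constant on $\pa$, each $V_Y$ equals $V_{\mu(Y)}$, so $\prod_Y V_Y^{\rho(Y)} = \prod_Z V_Z^{\mu(\rho)(Z)}$, a value that depends only on $\mu(\rho)$ and hence only on the $\approx_\pa$-class of $\rho$. Denoting this common monomial by $m_\M(V)$, one obtains
\[ F_X(V) \;=\; \sum_\M \fr[X,\M]\, m_\M(V) \]
whenever $V$ is constant on $\pa$, with $\M$ ranging over the classes of $\{\rho \mid \rho \act{\alpha}\pi \in R\}/{\approx_\pa}$, exactly the index set used in Definition~\ref{def:efsb}.

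For the forward direction (\BB{} implies exact fluid lumpability), if $X$ and $Y$ lie in a common block then \BB{} gives $\fr[X,\M] = \fr[Y,\M]$ for every $\M$, so the display above immediately yields $F_X(V) = F_Y(V)$ whenever $V$ is constant on $\pa$, which is the condition in Definition~\ref{def:e-lumpability}. For the backward direction, exact fluid lumpability applied to $X, Y$ in a common block and the same display give
\[ \sum_\M \bigl(\fr[X,\M] - \fr[Y,\M]\bigr)\, m_\M(V) \;=\; 0 \]
for every $V \in \RE_{\geq 0}^S$ constant on $\pa$. Since such $V$ is parametrized freely by the nonnegative values of $\{V_{\mu(H)} \mid H \in \pa\}$, and the monomials $m_\M$ attached to distinct classes $\M$ are pairwise distinct in these variables (because the multi-index $\mu(\rho)$ characterizes the class), the standard linear independence of distinct monomials as functions on $\RE_{\geq 0}^{|\pa|}$ forces $\fr[X,\M] = \fr[Y,\M]$ for every $\M$, establishing \BB{}.

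The main obstacle is the bookkeeping in the first step: one must check carefully that, after restricting to $V$ constant on $\pa$, the polynomial $F_X$ factors cleanly through $\approx_\pa$, and that summing $\fr(X,\rho)$ over $\rho \in \M$ yields precisely the quantity $\fr[X,\M]$ used in Definition~\ref{def:efsb}. Once this normal form is in place, both implications collapse to one and the same linear-independence-of-monomials argument.
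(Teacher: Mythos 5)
Your proposal is correct and follows essentially the same route as the paper's proof: both rewrite $F_X(V)$, for $V$ constant on $\pa$, as a sum over $\approx_\pa$-classes $\M$ of reactant multisets with coefficients $\fr[X,\M]$, and then invoke the fact that a polynomial identity on $\RE_{\geq0}^{|\pa|}$ forces equality of coefficients of distinct monomials (the paper phrases this via the multidimensional Taylor theorem). No gaps to report.
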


	\begin{remark}
		We wish to stress that \FB{} and \BB{} are not comparable: First, $\!\pa_O\!$ is not a \BB{}, as $\fr[C,\!\{A\!+\!B\}]\!=\!2$ and $\fr[E,\!\{A\!+\!B\}]\!=\!0$; Second, $\!\pa_E\!$ is not an \FB{}, as $\crr(A,B)\!=\!2$ and $\crr(B,B)\!=\!0$; Third, for the same reasons, $\{\{A,\!B\},\{C,\!E\},\{D\}\}$ is neither an \FB{} nor a \BB{}.
		Similar examples on models of biological relevance are provided in Section~\ref{sec:caseStudies}.
		\qed
	\end{remark}
	
\section{Reduced Chemical Reaction Networks up to CRN Bisimulations}\label{sec:reducedCRN}

We have shown that, given a CRN and a CRN bisimulation $\R$, we can analyze the aggregated ODE system according to $\R$. We now provide the notion of reduced CRN from which the aggregated ODEs can be directly generated, as depicted in Figure~\ref{fig:CommutingDiagram}. 

\begin{figure}[t]
	\center
	\begin{tikzpicture}[every node/.style={midway}]
	\matrix[column sep={4em,between origins},
	row sep={1em}] at (0,0)
	{
		\node(CRN)   {CRN}  ; & & &
		\node(reducedCRN){reduced CRN};
		\\\\
		\node[text centered](semantics){ODEs}; & &&
		\node[text centered](lumpedSemantics){lumped ODEs};
		\\};
	\draw[->] (CRN)		-- (semantics)		node[anchor=east]  {\emph{semantics}};
	\draw [->,line join=round,
	decorate, decoration={
		zigzag,
		segment length=4,
		amplitude=.9,post=lineto,
		post length=2pt
	}]
	(CRN)		-- (reducedCRN)		node[anchor=south] {\emph{reduce wrt $\pa$}};
	\draw[->,dotted] (semantics)	-- (lumpedSemantics)	node[anchor=north]  {\emph{lump wrt $\pa$}};
	\draw[->] (reducedCRN)	-- (lumpedSemantics)	node[anchor=west]  {\emph{semantics}};
	\end{tikzpicture}
	\caption{Relation among ($\pa$-reduced) CRNs and ($\pa$-lumped) semantics, with $\pa$ a bisimulation.}
	\label{fig:CommutingDiagram}
\end{figure}
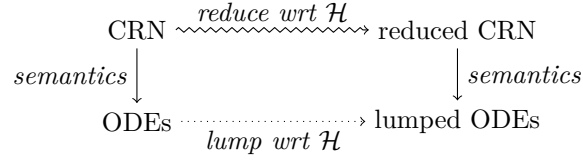


\begin{definition}[Forward reduction]\label{def:OReduction}
	Let $(S,R)$ be a CRN, $\pa$ an \FB{}, and $\mu$ its choice function. The \emph{${(\pa,F)}$-reduction of $(S,R)$} is given by $\reduce{(S,R)}{{(\pa,F)}} = (\reduce{S}{{(\pa,F)}},\reduce{R}{{(\pa,F)}})$, where $\reduce{S}{{(\pa,F)}} = \mu(S)$ and $\reduce{R}{{(\pa,F)}}$ is defined as follows: (F1) Discard all reactions $\rho \act{\alpha} \pi$ such that $\rho \not=\mu(\rho)$; (F2) Replace all remaining reactions $\rho \act{\alpha} \pi$ with $\rho \act{\alpha} \mu(\pi)$; (F3) Fuse all reactions that have the same reactants and products by summing their rates.
\end{definition}

The idea underlying forward reduction is to discard all reactions having non-representative reagents, and to replace the products of the remaining reactions with their representatives. This 
can be seen as a special case of Theorem~4.4 of~\cite{DBLP:journals/entcs/CamporesiFKP10}. 

\begin{example}\label{ex:reducedModel}
	Consider the \FB{} $\pa_{O}=\{\{A\},\{B\},\{C,E\},\{D\}\}$ used in Example~\ref{ex:OlumpedODEs}. The $(\pa_O,F)$-reduction of $(S_e, R_e)$ is (with $C$ being the representative of its block) ${S_e}^{\!\!\!(\pa_O,F)} \! = \! \{ A, B, C, D\}$, ${R_e}^{\!\!\!(\pa_O,F)} \! = \! \{ A\!\act{6}\!C,  B\!\act{6}\!D, A\!\!+\!\!B\!\act{2}\!C, {C\!\!+\!\!D}\!\act{5}\!{2C\!\!+\!\!D}\}$. Note that the reaction ${E\!\!+\!\!D}\!\act{5}\!{2E\!\!+\!\!D}$ is discarded, as $E$ is not a representative species.\qed
\end{example}

%

We now state that the ${(\pa,F)}$-reduction of an \FB{} 
$\pa$ induces the ODEs aggregated according to $\pa$. For example, the ${(\pa_O,F)}$-reduction of $(S_e,R_e)$ induces the ODEs shown in Example~\ref{ex:OlumpedODEs}, if applying the renaming $V_C=V_{CE}$.

\begin{theorem}[Forward reduction induces aggregation]\label{th:reducedModelHasLumpedODEs}
	Let $(S,R)$ be a CRN, $\pa$ an \FB{} and $\mu$ its choice function. Then, $\reduce{(S,R)}{{(\pa,F)}}$ is computed in at most $\mathcal{O} \big( |R| \cdot |S| \cdot (\log (|R|) +\log (|S|)) \big)$ steps. Crucially, if $F$ is the vector field of $(S,R)$ and $\hat{F}$ the one of $\reduce{(S,R)}{{(\pa,F)}}$, then $\sum_{X\in\Y} F_X(V) = \hat{F}_{\mu(Y)}(\sum_{X \in H_1} V_X, \ldots, \sum_{X \in H_m} V_X)$ for all $V \in \RE_{\geq0}^S$, $\Y\in\pa$ and $Y \in \Y$.
\end{theorem}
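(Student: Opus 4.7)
The theorem has two parts: the complexity bound and the ``semantics preservation'' identity. I would treat them in order. The complexity bound follows by walking through the three steps of Definition~\ref{def:OReduction}. Step~F1 requires checking $\rho = \mu(\rho)$ for each reaction, and since $|\rho|\le 2$ by the elementary-reaction assumption each such check costs $O(\log|S|)$ with species stored in a sorted structure; this gives $O(|R|\log|S|)$ overall. Step~F2 computes $\mu(\pi)$ for the surviving reactions, and as $|\pi|=O(|S|)$ in the worst case this costs $O(|R|\,|S|\log|S|)$. Step~F3 groups reactions with equal reactants and products via a sort, where comparing two reactions takes $O(|S|)$ (to compare products of size up to $|S|$), yielding $O(|R|\,|S|\log|R|)$. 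Summing the three contributions gives the claimed bound.

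For the main identity, the plan is to show that the two polynomials in $V$ agree monomial by monomial. Fix a multiset $\rho' \in \MS(S)$ and a block $\Y \in \pa$, and introduce the auxiliary quantities $s(\rho) := \sum_{\rho \act{\alpha}\pi\in R} \alpha$ and $p(\rho, \Y) := \sum_{\rho\act{\alpha}\pi\in R} \pi(\Y)\,\alpha$, where $\pi(\Y):=\sum_{Z\in\Y}\pi(Z)$. A direct computation gives that the coefficient of the monomial $\prod_Y V_Y^{\rho'(Y)}$ in $\sum_{X \in \Y} F_X(V)$ equals $c_1(\rho',\Y) = p(\rho',\Y) - \rho'(\Y)\,s(\rho')$, where $\rho'(\Y):=\sum_{Z\in\Y}\rho'(Z)$. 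On the reduced side, I would expand each factor $W_{\mu(H')}^{\hat\rho(\mu(H'))} = (\sum_{U\in H'} V_U)^{\hat\rho(\mu(H'))}$ via the multinomial theorem: the monomial $\prod_Y V_Y^{\rho'(Y)}$ then arises only from reduced reactions with reactants $\hat\rho = \mu(\rho')$ and has coefficient $c_2(\rho',\Y) = C(\mu(\rho'),\rho')\bigl(p(\mu(\rho'),\Y) - \rho'(\Y)\,s(\mu(\rho'))\bigr)$, where $C(\hat\rho,\rho'):=\prod_{H' \in \pa} \hat\rho(\mu(H'))! \big/ \prod_{U \in H'} \rho'(U)!$ collects the multinomial factors.

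The crux of the proof, and the step I expect to require the most care, is to derive the twin identities $s(\rho') = C(\mu(\rho'),\rho')\,s(\mu(\rho'))$ and $p(\rho',\Y)=C(\mu(\rho'),\rho')\,p(\mu(\rho'),\Y)$ from the \FB{} conditions. The tool is a substitution argument: for any $X,Y$ in the same block with $X \neq Y$ and any $\rho$ with $\rho(X)\ge 1$, unfolding $\crr[X,\rho-X]=\crr[Y,\rho-X]$ via Definition~\ref{def:conditionalReactionRate} yields $\rho(X)\,s(\rho) = (\rho-X+Y)(Y)\,s(\rho-X+Y)$, and analogously $\gr[X,\rho-X,\Y]=\gr[Y,\rho-X,\Y]$ yields $\rho(X)\,p(\rho,\Y)=(\rho-X+Y)(Y)\,p(\rho-X+Y,\Y)$. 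A short calculation then shows that the normalised quantities $s(\rho)\prod_U \rho(U)!$ and $p(\rho,\Y)\prod_U \rho(U)!$ are invariant under these rewrites. Iterating $\rho\mapsto \rho-X+\mu(X)$ until no non-representative $X$ remains in $\rho$ reaches $\mu(\rho')$, so the invariance gives $s(\rho')\prod_U \rho'(U)! = s(\mu(\rho'))\prod_U \mu(\rho')(U)!$. Rewriting the factorials on the right using $\mu(\rho')(\mu(H'))=\rho'(H')$ produces exactly $C(\mu(\rho'),\rho')$, and the same manipulation applies to $p$. Substituting the twin identities into $c_2(\rho',\Y)$ yields $c_1(\rho',\Y)$, so the polynomial equality holds monomial by monomial, which is the desired identity.
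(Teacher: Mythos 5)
Your proposal is correct, but it takes a genuinely different route from the paper's. The paper decomposes each $F_X$ into accretion and depletion terms and argues in two stages: Proposition~\ref{prop:aggregDeplAndAccr} rewrites $\sum_{X\in\Y}$ of each term as a polynomial in the block variables $V_{\Y_1},\ldots,V_{\Y_n}$, using the \FB{} conditions together with the symmetry $\crr(X,Y)=\crr(Y,X)$ of Proposition~\ref{prop:commCRR} and an explicit case split on unary versus binary reactants (with $\tfrac{1}{2}$ and ``$\tfrac{2}{2}$'' bookkeeping for unordered pairs); Proposition~\ref{prop:aggrAccrRedAccr} then shows that the reduced CRN's accretion/depletion of the representative, evaluated at the block sums, expands to the same expression. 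You instead compare the two polynomials coefficient by coefficient: the coefficient of $\prod_Y V_Y^{\rho'(Y)}$ on the original side is $p(\rho',\Y)-\rho'(\Y)\,s(\rho')$, on the reduced side it is a multinomial factor times the analogous quantity for $\mu(\rho')$, and the \FB{} conditions, unfolded as $\rho(X)\,s(\rho)=(\rho-X+Y)(Y)\,s(\rho-X+Y)$, make $s(\rho)\prod_U\rho(U)!$ (and likewise for $p$) invariant along the rewrite chain from $\rho'$ to $\mu(\rho')$ --- I checked the invariance and the multinomial matching, and both are sound. Your version is arguably tighter: the multinomial coefficient absorbs exactly the paper's factor-of-two manipulations, and the argument would extend verbatim to reactions with more than two reactants, whereas the paper's proof is hard-wired to the elementary case. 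What the paper's route buys is the intermediate lumped-ODE form in block variables, which it reuses to prove Theorem~\ref{th:dsbAndODELumping}; your direct coefficient comparison does not produce that intermediate object. The complexity analyses are essentially the same and both land within the claimed bound.
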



For the backward reduction, the underlying idea is to keep track only of differential contributions that affect the representative species $\mu(S)$. 

\begin{definition}[Backward reduction]\label{def:EReduction}
	Let $(S,R)$ be a CRN, $\pa$ a \BB{}, and $\mu$ its choice function. The \emph{${(\pa,B)}$-reduction of $(S,R)$} is given by $\reduce{(S,R)}{{(\pa,B)}} = (\reduce{S}{{(\pa,B)}},\reduce{R}{{(\pa,B)}})$, where $\reduce{S}{{(\pa,B)}} = \mu(S)$ and $\reduce{R}{{(\pa,B)}}$ is obtained as follows: (B1) Replace all reactions $\rho \act{\alpha} \pi$ with $\rho \act{\alpha} \tilde{\pi}$ where $\tilde{\pi}(X_i) := \pi(X_i)$ if $X_i \in \mu(S)$ and $\tilde{\pi}(X_i) := \rho(X_i)$ otherwise; (B2) Replace all such obtained reactions $\rho \act{\alpha} \pi$ with $\mu(\rho) \act{\alpha} \mu(\pi)$; (B3) Fuse all reactions that have the same reactants and products by summing their rates.
\end{definition}

\begin{example}
	Considering the CRN $(S_e,R_e)$ and the \BB{} $\pa_E$, (B1) changes $B\!\!\act{6}\!\!D$ in $\!B\!\!\act{6}\!\!D\!+\!B$, and ${A\!+\!B}\!\act{2}\!{C}$ in ${A\!+\!B}\!\act{2}\!{C\!+\!\!B}$, while \!(B2)\! yields $\!\{\!A\!\!\act{6}\!\!E, A\!\!\act{6}\!\!D\!+\!A, {A\!+\!A}\!\!\act{2}\!\!{C\!+\!A}, {C\!+\!D}\!\!\act{5}\!\!{2C\!+\!D}, {E\!+\!D}\!\!\act{5}\!\!{2E\!+\!D} \}$. Finally, (B3) does not introduce any change. \qed
\end{example}

\begin{theorem}[Backward reduction induces aggregation]\label{th:e:reduction}
	Let $(S,R)$ be a CRN, $\pa$ a \BB{} and $\mu$ its choice function. Then, $\reduce{(S,R)}{{(\pa,B)}}$ is computed in at most $\mathcal{O}\big(|R| \cdot |S| \cdot (\log(|R|) + \log(|S|))\big)$ steps. Crucially, if $\hat{F}$ denotes the vector field induced by $\reduce{(S,R)}{{(\pa,B)}}$, it holds that $F_X(V) = \hat{F}_X(V)$ for all $X \in \mu(S)$ and $V \in \RE_{\geq0}^S$ that are constant on $\pa$.
\end{theorem}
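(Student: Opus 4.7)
The plan is to prove the two claims separately: the complexity bound on constructing $\reduce{(S,R)}{(\pa,B)}$, and the pointwise equality $F_X(V) = \hat{F}_X(V)$ for $X \in \mu(S)$ and $V$ constant on $\pa$. I would follow the same template used in the proof of Theorem~\ref{th:reducedModelHasLumpedODEs}, handling the complexity step by step and then carrying out a reaction-by-reaction comparison for the equality.

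For the complexity, step (B1) rewrites the product multiset of each reaction in $\mathcal{O}(|S|)$, since such a multiset mentions at most $|S|$ distinct species; step (B2) substitutes $\mu(\cdot)$ into each reactant and product, requiring $\mathcal{O}(|S|)$ lookups per reaction at cost $\mathcal{O}(\log|S|)$ each when representatives are stored in a balanced dictionary; step (B3) canonicalizes the resulting reactions and sorts them so that duplicates become adjacent and can be fused by summing rates, contributing $\mathcal{O}(|R|\log|R|)$ comparisons of cost $\mathcal{O}(|S|)$ each. Adding the three contributions yields the stated $\mathcal{O}(|R|\cdot|S|\cdot(\log|R|+\log|S|))$ bound.

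For the pointwise equality, fix $X \in \mu(S)$ and $V$ constant on $\pa$, so that $V_Y = V_{\mu(Y)}$ for all $Y \in S$. I would track a single reaction $\rho \act{\alpha} \pi \in R$ through (B1)--(B2) into $\mu(\rho) \act{\alpha} \mu(\tilde{\pi})$ and compare its contribution to $F_X(V)$ with that of its reduced image to $\hat{F}_X(V)$. The kinetic factor rewrites as $\prod_{Y\in S} V_Y^{\rho(Y)} = \prod_{Y\in S} V_{\mu(Y)}^{\rho(Y)} = \prod_{Z\in\mu(S)} V_Z^{\mu(\rho)(Z)}$ by grouping species within each block. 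The stoichiometric coefficient at $X\in\mu(S)$ matches because (B1) prescribes $\tilde{\pi}(X)=\pi(X)$ and $\tilde{\pi}(Y)=\rho(Y)$ for every $Y\neq X$ in $X$'s block, so the non-representative contributions cancel in the difference, yielding $\mu(\tilde{\pi})(X) - \mu(\rho)(X) = \pi(X) - \rho(X)$. Step (B3) fuses reactions with identical reactants and products by summing rates, which by distributivity preserves the total, hence $F_X(V) = \hat{F}_X(V)$.

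The main obstacle I anticipate is the multiset bookkeeping around the coefficient identity: one has to verify carefully that at a representative $X \in \mu(S)$ the aggregation via $\mu$ pulls in the non-representative species from $\rho$ (not $\pi$), which is exactly what (B1) is designed to ensure and what causes the two differences to cancel. Once this cancellation is established and the kinetic-factor rewriting on $\pa$-constant $V$ is made precise, the per-reaction equality---and, by summing, the global one---follows. Notably, the \BB{} hypothesis does not intervene directly in the pointwise equality; its role is to endow the result with its dynamical meaning through Theorems~\ref{thm_efl} and~\ref{th:efsbAndODELumping}.
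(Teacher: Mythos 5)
Your proposal is correct and follows essentially the same route as the paper's proof: the complexity is accounted for operation by operation with the same data-structure assumptions, and the vector-field equality is obtained by tracking each reaction through (B1)--(B2), observing that the kinetic factor is invariant under $\mu$ when $V$ is constant on $\pa$ and that the stoichiometric difference at a representative is preserved because the non-representative reactant copies injected by (B1) cancel (the paper phrases this as $\pi(X_k)-\rho(X_k)=\tilde{\pi}(X_k)-\tilde{\rho}(X_k)$ across intermediate vector fields $G^0,G^1,G^2$, but it is the same cancellation). Your closing remark that the \BB{} hypothesis is not used in the pointwise identity itself, only in giving it dynamical meaning, is also accurate.
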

	
	
	\section{Partition Refinement Algorithms for CRN Bisimulations}\label{sec:pr}
	We study
	a polynomial-time algorithm for the computation of the coarsest bisimulations that refine an arbitrary input partition. 
	We start introducing two auxiliary equivalence relations.

	\begin{definition}[Splitter equivalences]
		Let $(S,R)$ be a CRN and $\pa$ a partition over $S$. Then, we write $X \sim_\pa^\OSB{} Y$ if (\ref{eq_i_and_ii}) is fulfilled by $(X,Y)$. Similarly, write $X \sim_{\mathcal H}^\ESB{} Y$ if $(X,Y)$ satisfies (\ref{eq_iii}).
	\end{definition}
	
	Algorithm~\ref{algorithm_cfsb} iteratively computes the coarsest forward or backward bisimulation (when $\chi = F$ or $\chi = B$, respectively) that refines a given input partition of species of a CRN.
	Note that, contrary to CRN reduction algorithms, one (parametric) algorithm suffices for both bisimulations.
	%
	Using the above splitter equivalences, at each iteration the blocks of the current partition $S/\!\!\sim_\pa$ are split in sub-blocks of $\sim^\chi_\mathcal{H}$-equivalent species $S / (\sim^\chi_\mathcal{H} \cap \sim_\mathcal{H})$. The algorithm terminates when no refinement is performed.
	
	The freedom in choosing the initial partition $\mathcal{G}$ is useful in both bisimulations. For \FB{} it allows to single out species that are the ``observables'' of the CRN.
	These are the species for which the modeler is interested in obtaining distinct ODE solutions, information which would otherwise be lost if such species are found in larger equivalence classes. \BB{} is lossless, hence this issue does not arise. However \BB{} requires the same initial conditions for equivalent species. In this case, an appropriate input partition may tell apart species for which it is known that the initial conditions are different.
	
	\begin{algorithm}[t!]
		\caption{Template partition refinement algorithm for the construction of the coarsest CRN bisimulations that refine some given initial partition $\mathcal{G}$.}\label{algorithm_cfsb}
		\begin{algorithmic}
			\REQUIRE A CRN $(S,R)$, a partition $\mathcal{G}$ of $S$ and $\chi \in \{F,B\}$.
			
			\STATE $\mathcal{H}$ $\longleftarrow$ $\mathcal{G}$
			\WHILE{\TRUE}
			\STATE $\mathcal{H}' \longleftarrow$ $S / (\sim^\chi_\mathcal{H} \cap \sim_\mathcal{H})$
			
			\IF{$\mathcal{H}' = \mathcal{H}$}
			\RETURN $\mathcal{H}$
			\ELSE
			\STATE $\mathcal{H} \longleftarrow \mathcal{H}'$
			\ENDIF
			\ENDWHILE
		\end{algorithmic}
	\end{algorithm}
	
	\begin{theorem}[Correctness]\label{thm_cfsb}
		Given a CRN $(S,R)$ and a partition $\mathcal{G}$ of $S$, Algorithm~\ref{algorithm_cfsb} calculates the coarsest forward and backward bisimulation that refines $\mathcal{G}$.
		In both cases, the number of steps needed 
		is polynomial in the number of species and reactions.
	\end{theorem}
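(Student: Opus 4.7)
The plan is to prove four claims in order: (a) the loop terminates, (b) the final $\mathcal{H}$ is of the requested bisimulation type and refines $\mathcal{G}$, (c) any \FB{} (resp.\ \BB{}) refining $\mathcal{G}$ also refines the output, and (d) the total work is polynomial in $|S|$ and $|R|$. Since $\sim^\chi_\mathcal{H}\cap\sim_\mathcal{H}\subseteq\sim_\mathcal{H}$, every update $\mathcal{H}\leftarrow\mathcal{H}'$ yields a refinement of the previous partition, so by induction the current $\mathcal{H}$ always refines $\mathcal{G}$; a strict refinement increases the block count by at least one, a quantity bounded by $|S|$, so the algorithm halts in at most $|S|$ iterations. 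At termination $\mathcal{H}'=\mathcal{H}$ is equivalent to $\sim_\mathcal{H}\subseteq\sim^\chi_\mathcal{H}$; unfolding $\sim^\chi_\mathcal{H}$ via $\crr,\gr$ for $\chi=F$ (resp.\ $\fr$ for $\chi=B$) recovers Definition~\ref{def:dsb} (resp.\ Definition~\ref{def:efsb}) verbatim, establishing (b).

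The heart of (c) is a monotonicity lemma: \emph{if $\mathcal{K}$ refines $\mathcal{H}$, then $\sim^\chi_\mathcal{K}$ refines $\sim^\chi_\mathcal{H}$.} For $\chi=F$ this is immediate, because each $\mathcal{H}$-block is a disjoint union of $\mathcal{K}$-blocks and $\gr[X,\rho,\cdot]$ is additive on disjoint unions, while $\crr[X,\rho]$ does not depend on the partition. For $\chi=B$ one first verifies that $\approx_\mathcal{K}$ refines $\approx_\mathcal{H}$ on multisets: $\mu(\rho)=\mu(\sigma)$ under $\mathcal{K}$ means $\rho$ and $\sigma$ have the same count in each $\mathcal{K}$-block, and since each $\mathcal{H}$-block is a union of $\mathcal{K}$-blocks this forces equal counts in each $\mathcal{H}$-block, i.e., the same condition under $\mathcal{H}$. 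Hence every $\mathcal{H}$-class $\M$ of reactant multisets decomposes into $\mathcal{K}$-classes $\M_1,\ldots,\M_m$ and $\fr[X,\M]=\sum_j\fr[X,\M_j]$, so equality of $\fr$ over each finer class transfers to the coarser one. Given the lemma, an induction on the iteration number settles (c): if a bisimulation $\mathcal{K}$ refines the current $\mathcal{H}$, then for every $X\sim_\mathcal{K}Y$ the bisimulation property gives $X\sim^\chi_\mathcal{K}Y$, which by the lemma yields $X\sim^\chi_\mathcal{H}Y$, and combined with $X\sim_\mathcal{H}Y$ produces $X\sim_{\mathcal{H}'}Y$.

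For (d), at each iteration I compute, for every $X\in S$, a rational-valued signature listing $\crr[X,\rho]$ and $\gr[X,\rho,\Y]$ over the at most $|R|$ distinct reactant multisets $\rho$ occurring in $R$ and the at most $|S|$ blocks $\Y\in\mathcal{H}$ (for $\chi=F$), or $\fr[X,\M]$ over the $\approx_\mathcal{H}$-classes of reactant multisets (for $\chi=B$); species are then bucketed into blocks by signature equality, intersected with the current $\mathcal{H}$, in polynomial time. Together with the $|S|$-iteration bound, the overall cost is polynomial in $|S|$ and $|R|$. The main obstacle is the backward direction of the monotonicity lemma: unlike the forward case, where refinement acts directly on species, here one must track how the multiset equivalence $\approx_\mathcal{H}$ interacts with the refinement of species-blocks and verify that the $\fr$ aggregates decompose additively along the induced refinement of reactant-multiset classes; once that is isolated, the remaining arguments are routine induction and complexity accounting.
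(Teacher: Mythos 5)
Your proposal is correct and follows essentially the same route as the paper's proof: your monotonicity lemma is the paper's Lemma~\ref{lem_key}, your fixed-point characterization of termination is Lemma~\ref{lem_key_2}, and your induction for claim (c) is the paper's invariant that any bisimulation refining $\mathcal{G}$ refines every iterate $\mathcal{H}_k$. The only (immaterial) difference is in the complexity accounting, where you bucket species by signatures while the paper bounds pairwise $\sim^\chi_\pa$ tests and quotient computation to get $\mathcal{O}(|R|^2\cdot|S|^5)$; both yield the claimed polynomial bound.
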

	
	Note that, due to space constraints, we only focussed on 
	the existence of a polynomial-time algorithm, and in the next section we provide numerical evidence of its scalability. The proof of this theorem gives a bound of $\mathcal{O}(|R|^2 \cdot |S|^5)$ on the number of steps. Tighter bounds could be obtained by extending classical partition refinement approaches available for labeled transitions 
	systems~\cite{partitionref,DBLP:journals/jcss/BaierEM00} to CRNs, which is however  the subject of future work.
	
	\section{Evaluation}\label{sec:caseStudies}
	
	We now evaluate \FB{} and \BB{}\@.  We first study their effectiveness in  reducing the ODEs of a number of biochemical models from the literature given in the \texttt{.net} format of BioNetGen~\cite{Blinov22112004}, version 2.2.5-stable.
	Using selected models we discuss how \FB{} and \BB{} relate with each other, and provide a biological interpretation of the aggregations. Finally, we compare them against $\kappa$'s fragmentation. All experiments are replicable using  a prototype available at \url{http://sysma.imtlucca.it/crnreducer/}.
	\newcolumntype{H}{>{\setbox0=\hbox\bgroup}c<{\egroup}@{}}
	\begin{table}[t]
		\centering
		\scalebox{0.79}{
			\begin{tabular}{H lrr rrrr H rrrr}
				\toprule
				&
				\multicolumn{3}{c}{\emph{Original model}} &
				\multicolumn{4}{c}{\emph{Forward reduction}} &
				\emph{WFB} &
				\multicolumn{4}{c}{\emph{Backward reduction}}\\
				\cmidrule(l){1-4} \cmidrule(l){5-8} \cmidrule(l){10-13}
				Model name &
				\multicolumn{1}{l}{\!\emph{Id}} &
				\multicolumn{1}{c}{ \ $| R |$ \ } &
				\multicolumn{1}{c}{ \ $| S |$ \ } &
				\multicolumn{1}{c}{ \quad\ \emph{Red.(s)} } &
				\multicolumn{1}{c}{ \ $| R |$ \ } &
				\multicolumn{1}{c}{ \ $| S |$ \ } &
				\multicolumn{1}{c}{\emph{Speed-up} } &
				$| S |$ &
				\multicolumn{1}{c}{ \quad  \emph{Red.(s)} } &
				\multicolumn{1}{c}{ \ $| R |$ \ } &
				\multicolumn{1}{c}{ \ $| S |$ \ } &
				\multicolumn{1}{c}{\emph{Speed-up} }
				\\
				\midrule
				e9 &
				M1 &  3538944 & 262146 & 4.61E+4  & 990 & 222  &  \multicolumn{1}{c}{ --- } & 222 &  7.65E+4 & 2708 & 222 & \multicolumn{1}{c}{ --- }\\
				e8 &
				M2 &  786432 & 65538  & 1.92E+3  & 720 & 167 &  \multicolumn{1}{c}{ --- }     & 167 &  3.68E+3  & 1950 & 167 &  \multicolumn{1}{c}{ --- } \\
				e7 &
				M3  &  172032 & 16386  & 8.15E+1  & 504 & 122 & 1.16E+3                             & 122 & 1.77E+2  & 1348 & 122 & 5.34E+2 \\
				e2 &
				M4  &  48 & 18  & 1.00E--3 &  24 & 12 &  1.00E+0                                             & 12 &  2.00E--3 &     45 & 12 & 1.00E+0\\
				machine &
				M5 & 194054 & 14531  & 3.72E+1  & 142165 & 10855 & 1.03E+0                     & 10855 &  1.32E+3 & 93033 & 6634 & 1.03E+0\\
				fceri\_gamma2\_asym &
				M6  & 187468 & 10734   & 3.07E+1  & 57508 & 3744 & 1.92E+1                     & 351 &   2.71E+2 & 144473 & 5575 & 3.53E+0\\
				fceri\_fyn\_lig &
				M7 & 32776 & 2506  & 1.26E+0 &   16481 & 1281 & 6.23E+0                          & 154 &  1.66E+1 & 32776 & 2506 & \multicolumn{1}{c}{x}\\
				pcbi.1000364.s006 &
				M8 & 41233 & 2562 & 1.12E+0  & 33075 & 1897 & 1.12E+0                            & 1897 &    1.89E+1 & 41233 & 2562 & \multicolumn{1}{c}{x}\\
				pcbi.1000364.s005 &
				M9  & 5033 & 471 & 1.91E--1 & 4068 & 345 & 1.04E+0                                   & 345 &   4.35E--1 & 5033 & 471     & \multicolumn{1}{c}{x}\\
				pcbi.1003217.s007 &
				M10 & 5797 & 796 & 1.61E--1   & 4210 & 503 & 1.47E+0                                  & 503 &  7.37E--1 & 5797 & 796     & \multicolumn{1}{c}{x}\\
				1471-2105-11-404-s1 &
				M11 & 5832 & 730 & 3.89E--1  & 1296 & 217 & 1.32E+1                                   & 217 &   6.00E--1 & 2434 & 217 & 7.55E+0\\
				scaff22 &
				M12 & 487 & 85 & 2.00E--3  & 264 & 56 &  1.88E+0                                          & 56 &   6.00E--3 & 426 & 56 & 1.31E+0\\
				ncc &
				M13 & 24 & 18 & 1.20E--2  & 24 & 18 &  \multicolumn{1}{c}{x}                           & 1 &  7.00E--3 & 6 & 3 & 1.00E+0 \\
				\bottomrule
			\end{tabular}
		}
		\caption{Forward and backward reductions and corresponding speed-ups in ODE analysis. Speed-up entries ``---'' indicate that the original model could not be solved; entries ``x'' indicate that the coarsest bisimulation did not reduce the original model.}
		\label{table:reductionResults}
	\end{table}
	
	\subparagraph*{Numerical results.}
	Table~\ref{table:reductionResults} lists our case studies:
	%
	four synthetic benchmarks to obtain combinatorially larger CRNs by varying the number of phosphorylation sites 
	(M1--M4)~\cite{citeulike:8493139};
	a model of pheromone signaling (M5,~\cite{10.1371/journal.pcbi.1003278}); two signaling pathways through the Fc$\varepsilon$ complex (M6--M7,~\cite{Faeder01042003,citeulike:8493139});  two models of enzyme activation (M8--M9,~\cite{journals/ploscb/BaruaFH09}); a model of a tumor suppressor protein (M10,~\cite{10.1371/journal.pcbi.1003217}); a model of tyrosine phosphorylation and adaptor protein binding (M11,~\cite{DBLP:journals/bioinformatics/ColvinMFHHP09,DBLP:journals/bmcbi/ColvinMGHHP10});    a MAPK model (M12,~\cite{Kocieniewski2012116}); and an \emph{influence network} (M13,~\cite{CardelliCRN}).
	
	Headers $\abs{R}$ and $\abs{S}$ give the number of reactions and species of the CRN (and of its reductions), respectively. The reduction times (\emph{Red.}) account also for the computation of the quotient  CRNs. The speed-up is the ratio between the time to solve the ODEs of the original CRN and that of the reduced one including the time to reduce the CRN. Measurements were taken on a 2.6 GHz Intel Core i5 with 4\,GB of RAM. The time interval of the ODE solution was taken from the original papers; for M1--M4, where this data was not available, time point 50.0 was used as an estimate of steady state.
	The initial conditions for the ODEs were also taken from the original papers. The initial partition for \FB{} was chosen to be the trivial one containing the singleton block $\{ S \}$ (i.e., no species was singled out). Instead, the initial partition for \BB{} was chosen consistently with the ODE initial conditions; that is, two species may be equivalent only if they have the same initial conditions in the original CRN. This ensured that the backward reduced CRN was a lossless aggregation of the original CRN.

	We make three main observations: (i) \FB{} and \BB{} can reduce a significant number of models. In the two largest models of our case studies, M1 and M2, the bisimulations were able to provide a compact aggregated ODE system which could be straightforwardly analyzed, while the solutions of the original models did not terminate due to out-of-memory errors, consistently with~\cite{citeulike:8493139}. 
	(ii) \FB{} and \BB{} are not comparable in general. For instance, both reduce M5 to $10855$ and $6634$ species, respectively, while M6 is reduced to 3744 species by \FB{}, and to 5574 by \BB{}. Also, \FB{} was able to reduce M7--M10, while \BB{} did not aggregate. The influence network M13 shows the opposite; in fact, none of the influence networks presented in~\cite{CardelliCRN} can be reduced up to \FB{} (here we showed M13, which is the largest one from~\cite{CardelliCRN}). (iii) Models M1--M4 and M12 show that the intersection between \FB{} and \BB{} is nonempty. 
	\subparagraph*{Biological interpretation.}
	Models M1 and M2 enjoy significant reductions and ODE analysis speed-ups. Here we use them to explain that \FB{} and \BB{} are effective at aggregating species representing symmetric sites in a complex. For this, let us consider M4, chosen for space reasons. 
	A typical equivalence class is for instance 
	$\{E(s!1).S(p1\!\!\sim\!\!P,p2\!\!\sim\!\!U!1), E(s!1).S(p1\!\!\sim\!\!U!1,p2\!\!\sim\!\!P)\}$. 
	According to the syntax of the BioNetGen language, the CRN species are formed from basic \emph{molecules} $S$ and $E$. Molecule $S$ has two binding sites ($p1$, and $p2$) which can be either in \emph{phosphorylated} state ($P$) or not ($U$); $E$ has one stateless binding site ($s$) which can bind to those of $S$ to form a  complex. The two sites of $S$ have equivalent capabilities in terms of binding with other species or changing state. For instance, the above equivalence class contains two species composed by $S$ and $E$, with $E$ bound to the unphosphorylated site of $S$  (here the exclamation mark links the binding sites used to form the species). Models M1 and M2 exhibit a fast growth of the number of species due to a larger number of symmetric sites, requiring distinct species to track exactly which site exhibits a particular phosphorylation state.
	This form of symmetry has also been studied in~\cite{Camporesi201129} where the authors propose an 
	approach to detect it directly at the $\kappa$ level.
	However, an experimental comparison could not be performed because~\cite{Camporesi201129} is not yet implemented.
	Although both  bisimulations give the same equivalence classes in these cases, the reduced CRNs have different reactions, since \FB{} provides the dynamics of the sums of equivalent species, while \BB{} considers the distinct dynamics of representative species.
	Instead, aggregation of identical binding sites is supported by BioNetGen. This can be seen in models M6 and M7, since they both have $Lig(l,l)$, a ligand with two copies of site $l$. Intuitively, the rule
	\begin{equation}\label{eq:bnglIdenticalSite}
	Rec(a) + Lig(l,l) \rightarrow Rec(a!1).Lig(l!1,l)
	\end{equation}
	gives rise to only one chemical complex in the underlying CRN, $Rec(a!1).Lig(l!1,l)$. This represents the (forward and backward) canonical representative  of a ligand bound to a single receptor $Rec(a)$. To see this, let us rename the two sites and \emph{expand} the rule appropriately: 
	\begin{equation}\label{eq:expansionIdenticalSite}
	\begin{split}
	Rec(a) + Lig(l_1,l_2) & \rightarrow Rec(a!1).Lig(l_1!1,l_2) \\
	Rec(a) + Lig(l_1,l_2) & \rightarrow Rec(a!1).Lig(l_1,l_2!1)
	\end{split}
	\end{equation}
	Then, this underlying CRN will distinguish the two sites.  However, applying either of our CRN bisimulations leads to the CRN for Equation~\eqref{eq:bnglIdenticalSite}.
	
	We remark that the original CRN sizes of M6 and M7 already account for the aggregations obtained with BioNetGen. Nevertheless, our CRN bisimulations allow for further (significant) reductions.
	For instance, part of the reductions for M6 are due to the presence of $Rec(a,b,g_1,g_2)$,  a molecule  with symmetric sites $g_1$ and $g_2$, similarly to those of M4.

	Symmetric sites are not the only property captured by our bisimulations. For instance in both M8 and M9 one of the \FB{} equivalence classes 
	is given by:
	\begin{align*}
	&\big\{ \ \! J(k!1).R(x!1,i\!\!\sim\!\!\text{on},l), \ J(k!1).L(r_1!2,r_2).R(x!1,i\!\!\sim\!\!\text{on},l!2), \\
	& \ \ J(k!1).L(r_1,r_2!2).R(x!1,i\!\!\sim\!\!\text{on},l!2), \\
	&\ \ J(k!1).L(r_1!2,r_2!3).R(x!1,i\!\!\sim\!\!\text{on},l!3).R(x,i\!\!\sim\!\!\text{on},l!2)\big \} .
	\end{align*}
	%
	A biological interpretation is that a species containing the molecule $J$ behaves in the same way as long as it is bound to a molecule $R$ having binding site $i$ in state ``on''. This is independent of whether $R$  is further complexed with other molecules via its binding site $l$; For instance, the first species models that $R$ is only bound to $J$, while in the second and third species it is also bound to $L$.
	Finally, in M5, one of the \BB{} equivalence classes is 
	\begin{align*}
	\big\{ & \text{Dig2(p!1).Ste12(dig1,dig2,dna!1,mapk)},\ \text{Fus3(p!1).Ste12(dig1,dig2,dna!1,mapk)}, \\
	& \text{Msg5(p!1).Ste12(dig1,dig2,dna!1,mapk)},\ \text{Sst2(p!1).Ste12(dig1,dig2,dna!1,mapk)}, \\
	& \text{Ste12(p!1).Ste12(dig1,dig2,dna!1,mapk)},\ \text{Ste2(p!1).Ste12(dig1,dig2,dna!1,mapk)} \big\} .
	\end{align*}
	It captures that genes Dig2, Fus3, Msg5, Sst2, Ste12, and Ste2, bind to the  protein Ste12 with equal rates. This yields equivalent  dynamics for these Ste12-gene complexes, and all those formed by them which are equal up to the gene bound to Ste12.
	
	
	
	\subparagraph*{Experimental comparison with $\kappa$-based reduction techniques.}
	We now experimentally compare our CRN bisimulations and fragmentation in the case of rule-based biochemical models for which the underlying CRN can be fully enumerated. All models in \autoref{table:reductionResults} belong to this class; however, none of them was originally available in $\kappa$, the only language that supports fragmentation. Thus, we performed a manual translation of a selection of the case studies from the BioNetGen language into $\kappa$.
	\footnote{
		\messageAboutK}
	We found:
	
	\begin{itemize}
		\item
		\emph{Models that can be reduced by CRN bisimulations but not by fragmentation.} 
		The $\kappa$ encoding of M12 (a case where only cosmetic syntactical changes are required) returned 85 fragments, equal to the size of the CRN, while both \FB{} and \BB{} reduced to 56 species. 
		The encodings of M6 and M7 necessitated expansions analogous to Equation~\eqref{eq:expansionIdenticalSite} because $\kappa$ does not currently
		support distinct sites with the same name.
		This led to bigger initial CRNs, for which fragmentation returned 58040 fragments for M6 and 10930 for M7.
		%
		%
		\item
		\emph{Models that can be reduced by fragmentation but not by our bisimulations.}
		The $\kappa$ model of early events of the EGF pathway in~\cite{4120004} is reduced from 356 species to 38 fragments~\cite{DBLP:conf/lics/DanosFFHK10}, while no aggregation is obtained with either \FB{} or \BB{}.
		%
		\item
		\emph{Models that can be reduced by both our bisimulations and fragmentation.} The $\kappa$ encodings of models M1--M4 present different reductions than using either bisimulation, specifically 38, 34, 30 and 10 fragments (versus 222, 167, 122, and 12 \FB{} and \BB{} equivalence classes, respectively). It can be shown that, in the latter examples, the reductions are complementary, in the sense that no two bisimilar species are included in the same fragment. While our bisimulations captured symmetric sites,  
		fragments explain that the sites of $S$ are \emph{independent}, i.e., the state of a site does not affect the dynamics of the other. For instance, one of the fragments for model M4 is
		\[
		\{S(p1\!\!\sim\!\!P,p2\!\!\sim\!\!P) , S(p1\!\!\sim\!\!P,p2\!\!\sim\!\!U) ,  E(s!1).S(p1\!\!\sim\!\!P,p2\!\!\sim\!\!U!1) ,   F(s!1).S(p1\!\!\sim\!\!P,p2\!\!\sim\!\!P!1)\}
		\]
		which essentially collects all species where the $p1$ site of molecule $S$ is phosphorylated.
	\end{itemize}
	
	
\section{Conclusion}
Forward and backward bisimulations are equivalence relations over the species of a chemical reaction network inducing a partition of the underlying mass-action system of ordinary differential equations. An experimental evaluation has demonstrated their usefulness  by showing their complementarity as well as significant model reductions in a number of biochemical models available in the literature. This has been supported by a prototype, which currently allows a ready-to-use tool-chain with BioNetGen, a state-of-the-art tool.

Ongoing work is studying stochastic counterparts of both forward and backward bisimulations, to obtain model reductions when the semantics of chemical reaction networks based on continuous-time Markov chains is considered.  Also, we plan to investigate the applicability of our bisimulations in other model repositories, e.g., those using the well-known SBML interchange format (\url{http://sbml.org}).

	\section*{Acknowledgement}
	The authors thank J. Krivine and J. Feret for helpful discussions, and the anonymous referees
	for suggestions that improved the paper.
	
	This work was partially supported by the EU project QUANTICOL, 600708. L. Cardelli is partially funded by a Royal Society Research Professorship. Part of this research has been carried out while M. Tribastone, M. Tschaikowski, and A. Vandin were at University of Southampton, UK.
	\bibliography{concur2015}
		
\appendix
\section*{APPENDIX}\label{sec:appendix}
\subsection*{A.1\ Forward CRN Bisimulation}

The following auxiliary lemma will be needed in the proofs of Propositions~\ref{prop:cofsb:ex} and~\ref{prop:cefsb:ex}.

\begin{lemma}\label{lem_ex_cor_aux}
	Let $I$ be an index set and let $\R_i$ denote equivalence relations on $S$ such that $S / \R_i$ is a refinement of some partition $\mathcal{G}$ of $S$. Then, $\R = ( \bigcup_{i \in I} \R_i)^\ast$ is such that $S / \R$ is a refinement of $\mathcal{G}$.
\end{lemma}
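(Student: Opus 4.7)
The plan is to identify refinement of a partition $\mathcal{G}$ with set-theoretic inclusion of the induced equivalence relations, which collapses the statement to a one-line closure argument. Let $\equiv_\mathcal{G}$ denote the equivalence relation on $S$ whose quotient is $\mathcal{G}$, i.e., $X \equiv_\mathcal{G} Y$ iff $X$ and $Y$ belong to the same block of $\mathcal{G}$. The key observation is that, for any equivalence relation $\mathcal{Q}$ on $S$, the partition $S/\mathcal{Q}$ refines $\mathcal{G}$ if and only if $\mathcal{Q} \subseteq {\equiv_\mathcal{G}}$ as subsets of $S \times S$.

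With this reformulation in hand, I would proceed as follows. First, from the hypothesis that $S/\R_i$ refines $\mathcal{G}$, deduce $\R_i \subseteq {\equiv_\mathcal{G}}$ for every $i \in I$, hence $\bigcup_{i \in I} \R_i \subseteq {\equiv_\mathcal{G}}$. Second, use the fact that $\equiv_\mathcal{G}$ is itself transitive (being an equivalence relation), so it is closed under taking transitive closure: any chain $X = Y_0, Y_1, \ldots, Y_n = Z$ with each consecutive pair in $\bigcup_i \R_i$ consists of elements all lying in a single block of $\mathcal{G}$, so $X \equiv_\mathcal{G} Z$. This yields $\R = (\bigcup_{i \in I} \R_i)^\ast \subseteq {\equiv_\mathcal{G}}$, and re-applying the reformulation gives that $S/\R$ refines $\mathcal{G}$.

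A small side remark worth including is that $\R$ is indeed an equivalence relation: the union of a family of equivalence relations is reflexive and symmetric (since each $\R_i$ is), so its transitive closure is reflexive, symmetric, and transitive. This ensures that $S/\R$ is well-defined as a partition so the conclusion is meaningful.

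There is essentially no obstacle here; the only subtle point is recognising that refinement of partitions corresponds contravariantly to inclusion of the associated equivalence relations, after which the result is immediate from the closure of $\equiv_\mathcal{G}$ under unions and transitive closure.
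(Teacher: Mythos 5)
Your proposal is correct and is essentially the paper's own argument: the paper proves $\R \subseteq {\equiv_{\mathcal G}}$ by an explicit induction along a chain $x_0\, \R_{i_0}\, x_1 \cdots \R_{i_{k-1}}\, x_k$, showing every $x_j$ stays in the block of $\mathcal G$ containing $x_0$, which is exactly your observation that the transitive closure of a subrelation of the transitive relation $\equiv_{\mathcal G}$ remains inside it. Your packaging via the contravariant correspondence between refinement and inclusion of equivalence relations is a slightly cleaner way of saying the same thing, with no substantive difference.
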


\begin{proof}
	We first note that $\R$ is an equivalence relation over $S$, as it is the transitive closure of the union of equivalence relations over $S$. For $i \in I$, set $\pa_i = S/ \R_i$ and $\pa = S / \R$. Then, for any $(y_0,y_1) \in \R$, there exist $x_0,\ldots,x_k \in S$ such that $x_0 R_{i_0} x_1 R_{i_1} \ldots R_{i_{k-1}} x_k$ with $y_0 = x_0$, $y_1 = x_k$ and $i_j \in I$ for all $0 \leq j \leq k - 1$. Moreover, $x_0 \in G$ for some (unique) $G \in \mathcal{G}$. We show that $x_0, \ldots, x_k \in G$ by induction. Since the base case $j = 0$ is trivial, let us consider the induction step $j - 1 \to j$. Then, $x_{j - 1} \R_{i_{j - 1}} x_j$ implies the existence of some $H \in \mathcal{H}_{i_{j - 1}}$ such that $x_{j - 1}, x_j \in H$. Let $G_{j - 1} \in \mathcal{G}$ be such that $H \subseteq G_{j - 1}$. Since $x_{j - 1} \in G$ by induction hypothesis and $x_{j - 1} \in G_{j - 1}$, it holds that $G \cap G_{j - 1} \neq \emptyset$. Since $\mathcal{G}$ is a partition, this implies that $G = G_{j - 1}$, yielding in turn $x_j \in G$.
\end{proof}


\begin{proof}[Proof of Proposition~\ref{prop:cofsb:ex}]
	We first note that $\R$ is an equivalence relation over $S$, as it is the transitive closure of the union of equivalence relations over $S$. For $i \in I$, set $\pa_i = S/ \R_i$ and $\pa = S / \R$. Note that, for all $i \in I$, any $\Y \in \pa$ is a union of blocks of $\pa_i$. For any $(y_0,y_1) \in \R$, there exist $x_0,\ldots,x_k \in S$ such that $x_0 R_{i_0} x_1 R_{i_1} \ldots R_{i_{k-1}} x_k$ with $y_0 = x_0$, $y_1 = x_k$ and $i_j \in I$ for all $0 \leq j \leq k - 1$. Noting that $x_j R_{i_j} x_{j + 1}$ for all $0 \leq j \leq k - 1$, we infer that $\crr[x_j,\rho] = \crr[x_{j + 1},\rho]$ and $\gr[x_j,\rho,\Y] = \gr[x_{j + 1},\rho,\Y]$ for all $\rho \in \ms(S)$ and $\Y \in \pa$. Thus, we infer that $\crr[x_0,\rho] = \crr[x_1,\rho] = \ldots = \crr[x_{k - 1},\rho] = \crr[x_k,\rho]$ and $\gr[x_0,\rho,\Y] = \gr[x_1,\rho,\Y] = \ldots = \gr[x_{k - 1},\rho,\Y] = \gr[x_k,\rho,\Y]$. The remainder of the claim follows then from Lemma~\ref{lem_ex_cor_aux}.
\end{proof}

We now provide a sort of commutative property for $\crr$ and $\gr$ for the case of singleton $\rho$, used to prove Theorem~\ref{th:dsbAndODELumping}.   

%

\begin{proposition}\label{prop:commCRR}
	Let $(S,R)$ be a CRN and $X,Y\in S$.
	Then it holds
	\begin{align}
	\crr(X,Y) & = \crr(Y,X)\label{eq:c}\\
	\gr(X,Y,Z) & = \gr(Y,X,Z)\label{eq:p}
	\end{align}
\end{proposition}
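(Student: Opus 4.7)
The plan is to unfold both sides of Equations~(\ref{eq:c}) and~(\ref{eq:p}) using Definition~\ref{def:conditionalReactionRate}, treating $X$ and $Y$ as singleton multisets $\multiset{X}$ and $\multiset{Y}$ per the convention fixed in Section~\ref{sec:crn}, and then observe that the two resulting expressions are literally identical.

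For Equation~(\ref{eq:c}), the key observation is that multiset addition is commutative, so that $X + \multiset{Y}$ and $Y + \multiset{X}$ both denote the same multiset $\multiset{X,Y}$. Consequently the index set of the sum $\{X + \multiset{Y} \act{\alpha} \pi \in R\}$ appearing in $\crr(X,Y)$ coincides with the index set $\{Y + \multiset{X} \act{\alpha} \pi \in R\}$ appearing in $\crr(Y,X)$, and corresponding reactions contribute the same rate $\alpha$ on both sides. For the multiplicity prefactor, I would note that $\multiset{Y}(X)$ equals $1$ if $X = Y$ and $0$ otherwise, and symmetrically $\multiset{X}(Y)$ equals $1$ if $X = Y$ and $0$ otherwise; hence $(\multiset{Y}(X) + 1) = (\multiset{X}(Y) + 1)$, both being $2$ when $X = Y$ and $1$ otherwise. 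Combining these two observations immediately gives the equality.

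Equation~(\ref{eq:p}) follows by the same argument verbatim: the additional factor $\pi(Z)$ inside each summand depends only on the product multiset $\pi$ of the individual reaction, which is unaffected by whether we view its reactant multiset as $X + \multiset{Y}$ or $Y + \multiset{X}$. No step presents any real obstacle; the proposition simply records the symmetry intrinsic to the definitions, arising from commutativity of multiset union. Its role, as the placement immediately before Theorem~\ref{th:dsbAndODELumping} suggests, is to let the two reactants of a binary reaction play interchangeable roles in the bookkeeping of reaction and production rates during that theorem's proof.
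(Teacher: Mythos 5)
Your proposal is correct and follows essentially the same route as the paper's own proof: unfold Definition~\ref{def:conditionalReactionRate}, observe that commutativity of multiset union makes the two sums range over the same reactions, and check that the prefactors $(\multiset{Y}(X)+1)$ and $(\multiset{X}(Y)+1)$ coincide (the paper phrases both as the multiplicity $\multiset{X,Y}(X)=\multiset{X,Y}(Y)$, which is the same observation). The extension to $\gr$ via the extra factor $\pi(Z)$ is likewise identical.
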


\begin{proof}
	The statement follows from Definition~\ref{def:conditionalReactionRate}, since \eqref{eq:c} and~\eqref{eq:p} can be rewritten, resp., as:
	\begin{align*}
	\qquad\qquad\
	\multiset{X,Y}(X)\cdot\!\!\!\sum_{X+Y\act{\rate} \pi \in R}  \rate &= \multiset{X,Y}(Y)\cdot\!\!\!\sum_{X+Y\act{\rate} \pi \in R}  \rate\\
	\multiset{X,Y}(X)\cdot\!\!\!\sum_{X+Y\act{\rate} \pi \in R}  \rate \cdot \pi(Z) &= \multiset{X,Y}(Y)\cdot\!\!\! \sum_{X+Y\act{\rate} \pi \in R}  \rate \cdot \pi(Z)
	\end{align*}
	%
\end{proof}

\paragraph*{Proof of Theorem~\ref{th:dsbAndODELumping}}

For $(S,R)$ a CRN, $\Y\subseteq S$, and $V$ a concentration function for $(S,R)$,  we use $V_\Y$ for $\sum_{X \in \Y}\!V_X$, and \!$X^\Y$\! or \!$Y^\Y$\! to denote any element of $\Y$.
We first separate the influence exerted by each reaction to the concentration of a species in a negative one (the \emph{depletion} rate), and in a positive one (the \emph{accretion rate}).

\begin{remark}\label{rm:accrDepl}
	Let $(S,R)$ be a CRN, $V$ a concentration function
	for $(S,R)$, and $X\in S$. We denote the \emph{depletion} and \emph{accretion} rates of $X$ due to a reaction of $R$ as, respectively,
	\begin{align*}
	Depl(\rho \act{\rate} \pi,X,V) & = \rho(X) \cdot \rate \cdot  \prod_{Y \in S} V_Y^{\rho(Y)} \\
	Accr(\rho \act{\rate} \pi,X,V) & = \pi(X) \cdot \rate \cdot  \prod_{Y \in S} V_Y^{\rho(Y)}
	\end{align*}
	We can then reformulate the definition of vector field of $(S,R)$ given in Section~\ref{sec:crn} as:
	\begin{align*}
	\ \ \ \ \quad\quad\qquad F_X(V) = \sum_{\rho \act{\rate} \pi \in R} \Big(Accr(\rho \act{\rate} \pi,X,V) - Depl(\rho \act{\rate} \pi,X,V)\Big) \qquad\quad\quad\ \ \ \  \qed
	\end{align*}
\end{remark}

We now state that the aggregated accretion and depletion rates of an equivalence class of an \FB{} can be written in terms of the aggregate concentrations.
\begin{proposition}[Aggregated depletion and accretion rate]\label{prop:aggregDeplAndAccr}
	Let $(S,R)$ be a CRN, $V$ a concentration function for $(S,R)$, and
	$\pa$ an \FB{}. Then, for any $\Y \in \pa$
	\[\sum_{X \in \Y} \sum_{\rho\act{\rate}\pi \in R} Depl(\rho\act{\rate}\pi,V,X) \quad \text{ and } \quad
	\sum_{X \in \Y} \sum_{\rho\act{\rate}\pi \in R} Accr(\rho\act{\rate}\pi,V,X)\]
	can be written in terms of the aggregated concentrations
	$V_{\Y_1}\ldots V_{\Y_n}$. 
\end{proposition}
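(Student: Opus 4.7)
The plan is to rewrite the aggregated depletion and accretion sums as polynomials in the block concentrations $V_{\Y_1}, \ldots, V_{\Y_n}$, exploiting the restriction to elementary reactions. Every reactant multiset $\rho$ appearing in $R$ falls into one of four types: $\emptyset$, $\{X\}$, $\{X,Y\}$ with $X \neq Y$, or $2X$. I will case-split both sums according to the reactant type and then re-express each piece through $\crr$ and $\gr$ (Definition~\ref{def:conditionalReactionRate}), invoking the forward bisimulation condition~\eqref{eq_i_and_ii} and the symmetry established in Proposition~\ref{prop:commCRR}.

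For depletion, a direct case analysis shows that the depletion of a single species $X$ equals $V_X \crr[X,\emptyset] + V_X \sum_{Y \in S} V_Y \crr[X,\{Y\}]$. The fact that the diagonal term $V_X^2 \crr[X,\{X\}]$ correctly reproduces the factor $2$ coming from $2X$ reactions is automatic, since $\crr[X,\{X\}] = 2 \sum_{2X \act{\rate} \pi \in R} \rate$. Summing over $X \in \Y$ and invoking~\eqref{eq_i_and_ii} makes every $\crr[X,\cdot]$-factor constant on $\Y$, so the leading sum $\sum_{X \in \Y} V_X$ collapses to $V_{\Y}$. A second application of~\eqref{eq_i_and_ii} to the species $Y$, combined with commutativity of $\crr$ (Proposition~\ref{prop:commCRR}, eq.~\eqref{eq:c}), shows that for any $X \in \Y$ the value $\crr[X,\{Y\}]$ depends only on the block $\Y_i$ containing $Y$. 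Grouping $Y$'s by block and using $\sum_{Y \in \Y_i} V_Y = V_{\Y_i}$ then yields the desired polynomial in the aggregates.

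For accretion, the zero-order reactions contribute a constant in $V$, and the first-order reactions contribute $\sum_X V_X \gr[X,\emptyset,\Y]$, which is aggregated via~\eqref{eq_i_and_ii}. The second-order contribution naively splits into an unordered-pair sum for $X+Y$ with $X \neq Y$ and a separate diagonal sum for $2X$; using $\gr[X,\{X\},\Y] = 2 \sum_{2X \act{\rate} \pi \in R} \rate \, \pi(\Y)$ together with $\gr[X,\{Y\},\Y] = \gr[Y,\{X\},\Y]$ (Proposition~\ref{prop:commCRR}, eq.~\eqref{eq:p}), both pieces merge into the single symmetric sum $\tfrac{1}{2} \sum_{X,Y \in S} V_X V_Y \gr[X,\{Y\},\Y]$. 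Applying~\eqref{eq_i_and_ii} in both arguments, transferring the bisimulation condition to the second argument via the commutativity, reduces this sum to a polynomial in the $V_{\Y_i}$'s.

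The main obstacle will be precisely this uniform handling of the binary reactions in the accretion sum: if $2X$ reactions are kept separate, the aggregation of their diagonal terms leaves coefficients proportional to $\sum_{X \in \Y_i} V_X^2$, which is \emph{not} expressible in the block variables alone. The symmetrization above, which merges the $2X$ contributions with the ordered-pair decomposition of the $X+Y$ contributions into a single factor-$\tfrac{1}{2}$ double sum, is the key step that rescues the reduction.
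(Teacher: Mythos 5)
Your proposal is correct and follows essentially the same route as the paper's proof: decompose by reaction arity, express the contributions through $\crr$ and $\gr$, use the commutativity of Proposition~\ref{prop:commCRR} to transfer the bisimulation condition~\eqref{eq_i_and_ii} to the partner species, and absorb the diagonal $2X$ terms into a factor-$\tfrac{1}{2}$ symmetric double sum. The obstacle you flag at the end is precisely the point the paper handles with its $\tfrac{2}{2}$ and $\tfrac{1}{2}$ manipulations, so nothing is missing.
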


\begin{proof}

	We start addressing the depletion rate case. 
	We have
	\begin{align}
	\sum_{X \in \Y} \sum_{\rho \act{\rate} \pi \in R} Depl(\rho \act{\rate} \pi,V,X)&  =
	\sum_{X \in \Y} \sum_{\rho\in\MS(S)}\sum_{\rho \act{\rate} \pi \in R} \rho(X) \cdot \rate \cdot \prod_{Y \in S} V_Y^{\rho(Y)} \notag \\
	& = \sum_{X \in \Y} \sum_{\rho\in\MS(S)}\prod_{Y \in S} V_Y^{\rho(Y)}\sum_{\rho \act{\rate} \pi \in R} \rho(X) \cdot \rate \label{eq:aggregDepl0}
	\end{align}
	Given that we have unary or binary reactions only, we can rewrite (\ref{eq:aggregDepl0}) as:
	\begin{align}
	&\sum_{X \in \Y} \sum_{Y\in S}V_Y\sum_{Y \act{\rate} \pi \in R} \multiset{Y}(X) \cdot \rate \mathop{+}\label{eq:aggregDepl1}\\
	&\quad\sum_{X \in \Y} \sum_{Y+Y'\in\MS(S)}V_Y\cdot V_Y'\sum_{Y+Y' \act{\rate} \pi \in R} \multiset{Y,Y'}(X) \cdot \rate \label{eq:aggregDepl2}
	\end{align}
	Now, (\ref{eq:aggregDepl1}) can be easily rewritten in terms of the aggregated variables only:
	\begin{align*}
	\sum_{X \in \Y} \sum_{Y\in S}V_Y\sum_{Y \act{\rate} \pi \in R} \multiset{Y}(X) \cdot \rate & =
	\sum_{X \in \Y} V_X\sum_{X \act{\rate} \pi \in R} \rate  =  \sum_{X \in \Y} V_X\cdot\crr[X,\emptyset],
	\end{align*}
	from which by the condition on $\crr$ of Definition~\ref{def:dsb} we obtain
	$\crr[X^\Y,\emptyset]\cdot V_\Y$.
	
	Instead, as regards (\ref{eq:aggregDepl2}) we have:
	\begin{align*}
	&\sum_{X \in \Y} \sum_{Y+Y'\in\MS(S)} V_Y\cdot V_Y'\sum_{Y+Y' \act{\rate} \pi \in R} \multiset{Y,Y'}(X) \cdot \rate=\\
	&\sum_{X \in \Y} \sum_{X+Y\in\MS(S)}V_X\cdot V_Y\sum_{X+Y \act{\rate} \pi \in R} \multiset{X,Y}(X) \cdot \rate=\\
	&\sum_{X \in \Y}  V_X\sum_{X+Y\in\MS(S)} V_Y\cdot\crr[X,Y]=\comInAlign{by the condition on $\crr$ of \text{Definition~\ref{def:dsb}}}\\
	&\sum_{X \in \Y}  V_X\sum_{X+Y\in\MS(S)} V_Y\cdot\crr[X^\Y,Y]=\\
	&\sum_{X \in \Y}  V_X\sum_{\tilde\Y\in\pa}\sum_{Y\in\tilde \Y} V_Y\cdot\crr[X^\Y,Y]=\comInAlign{by \text{Proposition~\ref{prop:commCRR}}}\\
	&\sum_{X \in \Y}  V_X\sum_{\tilde\Y\in\pa}\sum_{Y\in\tilde \Y} V_Y\cdot\crr[Y,X^\Y]=\comInAlign{by the condition on $\crr$ of \text{Definition~\ref{def:dsb}}}\\
	&\sum_{X \in \Y}  V_X\sum_{\tilde\Y\in\pa}\crr[Y^{\tilde\Y},X^\Y]\sum_{Y\in\tilde \Y} V_Y=\\
	&\sum_{X \in \Y}  V_X\sum_{\tilde\Y\in\pa}\crr[Y^{\tilde\Y},X^\Y]\cdot V_{\tilde\Y}=\\
	&\sum_{\tilde\Y\in\pa} V_{\tilde\Y} \cdot\crr[Y^{\tilde\Y},X^\Y] \sum_{X \in \Y} V_X=\\
	&\sum_{\tilde\Y\in\pa} V_{\tilde\Y} \cdot\crr[Y^{\tilde\Y},X^\Y] \cdot V_\Y= V_\Y\sum_{\tilde\Y\in\pa} V_{\tilde\Y} \cdot\crr[X^\Y,Y^{\tilde\Y}]
	\end{align*}
	closing the case.

We now address the accretion rate case. We have
\\
\begin{align}
\sum_{X \in \Y} \sum_{\rho \act{\rate} \pi \in R} Accr(\rho \act{\rate} \pi,V,X) & =
\sum_{X \in \Y} \sum_{\rho \in \ms(S)} \sum_{\rho \act{\rate} \pi \in R} \pi(X) \cdot \rate \cdot \prod_{Y \in S} V_Y^{\rho(Y)} \notag\\
& = \sum_{X \in \Y} \sum_{\rho \in \ms(S)} \prod_{Y \in S} V_Y^{\rho(Y)} \sum_{\rho \act{\rate} \pi \in R} \pi(X) \cdot \rate \label{eq:aggregAccr}
\end{align}
Given that we have unary or binary reactions only, we can rewrite (\ref{eq:aggregAccr}) as:
\begin{align}
\sum_{X \in \Y} \sum_{Y \in S} V_Y \sum_{Y \act{\rate} \pi \in R} \pi(X) \cdot \rate \mathop{+}\label{eq:aggergAccr1} \qquad \qquad \qquad \qquad \qquad \qquad \\
\quad  \mathop{+} \sum_{X \in \Y} \sum_{Y+Y' \in \ms(S)} V_Y\cdot  V_{Y'} \sum_{Y+Y' \act{\rate} \pi \in R} \pi(X) \cdot \rate
\label{eq:aggergAccr2}
\end{align}
Now, (\ref{eq:aggergAccr1}) can be easily rewritten in terms of the aggregated variables only: 
\begin{align*}
&\sum_{Y \in S} V_Y \sum_{X \in \Y} \sum_{Y \act{\rate} \pi \in R} \pi(X) \cdot \rate =\sum_{Y \in S} V_Y \cdot \gr[Y,\emptyset,\Y] =
\sum_{\tilde \Y \in \pa}\sum_{Y \in \tilde \Y} V_Y \cdot\gr[Y,\emptyset,\Y]\;,
\end{align*}
from which, by the condition on $\gr$  of Definition~\ref{def:dsb}, we obtain
$\sum_{\tilde \Y \in \pa}\gr[Y^{\tilde \Y},\emptyset,\Y]\cdot  V_{\tilde \Y}$\ .
Instead, as regards (\ref{eq:aggergAccr2}) we have:
	\begin{align}
	&\sum_{{Y+Y'} \in \ms(S)} V_Y\cdot V_{Y'}\sum_{X \in \Y} \sum_{{Y+Y'} \act{\rate} \pi \in R} \pi(X) \cdot \rate=\notag\\
	&\sum_{{Y+Y} \in \ms(S)} V_Y^2 \sum_{X \in \Y} \sum_{{Y+Y} \act{\rate} \pi \in R} \pi(X) \cdot \rate +\notag\\
	&\quad\sum_{{{Y+Y'} \in \ms(S)}\ s.t.\ {Y \not = Y'}} V_Y\cdot V_{Y'}\sum_{X \in \Y}\sum_{{Y+Y'} \act{\rate} \pi \in R} \pi(X) \cdot \rate =\notag\quad\comInAlign{see below}\\
	&\frac{1}{2}\sum_{Y \in S} V_Y^2 \sum_{X \in \Y} 2\sum_{{Y+Y} \act{\rate} \pi \in R} \pi(X) \cdot \rate +
	\label{eq:twovoertwo}\\
	&\quad\frac{1}{2}\sum_{Y \in S}V_Y\sum_{{Y' \in S}\ s.t.\ {Y \not = Y'}} V_{Y'} \sum_{X \in \Y}\sum_{{Y+Y'} \act{\rate}
		\pi \in R} \pi(X) \cdot \rate = \quad 
	\label{eq:half}\\
	&\frac{1}{2}\sum_{Y \in S}  V_Y^2 \cdot\gr[Y,Y,\Y] +
	\frac{1}{2}\sum_{Y \in S} V_Y\sum_{{Y' \in S}\ s.t.\ {Y \not = Y'}} V_{Y'}
	\cdot\gr[Y,{Y'},\Y]=\notag\\
	&\frac{1}{2}\sum_{Y \in S}  V_Y\sum_{Y' \in S}  V_{Y'} \cdot\gr[Y,{Y'},\Y]=\notag\\
	& \frac{1}{2}\sum_{\tilde \Y \in \pa}\sum_{Y \in \tilde \Y} V_Y {\sum_{Y' \in S}  V_{Y'} \cdot\gr[Y,{Y'},\Y]}=\qquad\text{\emph{(by cond on $\gr$ of \text{Definition~\ref{def:dsb}})}}\notag\\
	%
	& \frac{1}{2}\sum_{\tilde \Y \in \pa}V_{\tilde \Y}\sum_{Y' \in S} V_{Y'} \cdot\gr[Y^{\tilde \Y},{Y'},\Y]=\notag\\
	& \frac{1}{2}\sum_{\tilde \Y \in \pa}V_{\tilde \Y}\sum_{\widehat \Y \in \pa}\sum_{Y' \in \widehat \Y} V_{Y'} \cdot\gr[Y^{\tilde \Y},{Y'},\Y]=\quad\text{\emph{(by \text{Proposition~\ref{prop:commCRR}})}}\notag\\
	& \frac{1}{2}\sum_{\tilde \Y \in \pa}V_{\tilde \Y}\sum_{\widehat \Y \in \pa}\sum_{Y' \in \widehat \Y} V_{Y'} \cdot\gr[Y',Y^{\tilde \Y},\Y]=\quad\text{\emph{(by cond on $\gr$ of \text{Definition~\ref{def:dsb}})}}\notag\\
	& \frac{1}{2}\sum_{\tilde \Y \in \pa}V_{\tilde \Y}\sum_{\widehat \Y \in \pa}\gr[Y^{\widehat{\Y}},{Y^{\tilde \Y}},\Y]\sum_{Y' \in \widehat \Y} V_{Y'}= \frac{1}{2}\sum_{\tilde \Y \in \pa}V_{\tilde \Y}\sum_{\widehat \Y \in \pa}V_{\widehat \Y}\cdot\gr[Y^{\widehat{\Y}},{Y^{\tilde \Y}},\Y]\notag
	%
	\end{align}
Where (\ref{eq:twovoertwo}) is multiplied by $2/2$, while (\ref{eq:half}) is divided by two because each multiset is considered twice (e.g., $Y+Y'$ and $Y'+Y$).
\end{proof}

Finally we can provide the proof of Theorem~\ref{th:dsbAndODELumping}.

\begin{proof}[Proof of Theorem~\ref{th:dsbAndODELumping}]
	The theorem can be proved by showing that, for each $\Y\in \pa$, the sum of the ODEs of the species in $\Y$ can be expressed in terms of the aggregated concentrations of the blocks of $\pa$ only. Such sum is:
	\begin{align*}
	&\sum_{X \in \Y} \sum_{\rho\act{\rate}\pi \in R} Accr(\rho\act{\rate}\pi, V,X) -
	\sum_{X \in \Y} \sum_{\rho\act{\rate}\pi \in R} Depl(\rho\act{\rate}\pi, V,X)
	\end{align*}
	Thus, the claim follows from Proposition~\ref{prop:aggregDeplAndAccr}. 
\end{proof}

\paragraph*{Proof of Theorem~\ref{th:reducedModelHasLumpedODEs}}
Given a CRN $(S,R)$ and an \FB{} $\pa$, we hereby provide the technical results relating the $\pa$-lumped ODEs 
of $(S,R)$ and the ODEs 
of its $(\pa,F)$-reduction. 
In the following we use $\canonical{\Y}$ to denote the canonical representative of the species in a block $\Y\in\pa$.
In addition, for $X\in S$ and $\pi\in\ms(S)$, we may use  $\mu_X$ for $\mu(X)$ and  $\mu_\pi$ for $\mu(\pi)$.
Finally, given a concentration function $V$ for $(S,R)$, we use $\reduce{V}{(\pa,F)}$ for the corresponding concentration function for $\reduce{(S,R)}{(\pa,F)}$, having a component $V_{\canonical{\Y}}^{(\pa,F)} = V_\Y$ per block $\Y\in\pa$. 

We start providing a proposition used in the proof of Theorem~\ref{th:reducedModelHasLumpedODEs}.

\begin{proposition}\label{prop:aggrAccrRedAccr}
	Let $(S,R)$ be a CRN, $\R$ an \FB{}, $\pa = S / \R$, and $\mu$ its choice function.
	Let $V$ be a concentration function for $(S,R)$. 
	Then, for any $\Y\in\pa$ we have
	\begin{align}\sum_{X\in\Y}\sum_{\rho \act{\rate} \pi \in R} Accr(\rho \act{\rate} \pi,X, V) =
	\sum_{\stackrel{\rho \act{\rate} \pi \in R}{\rho=\mu(\rho)}} Accr(\rho \act{\rate} \mu(\pi),\canonical{\Y},\reduce{V}{(\pa,F)})\label{eq:aggrAccrRedAccrInProp}\\
	\sum_{X\in\Y}\sum_{\rho \act{\rate} \pi \in R} Depl(\rho \act{\rate} \pi,X, V) = \sum_{\stackrel{\rho \act{\rate} \pi \in R}{\rho=\mu(\rho)}} Depl(\rho \act{\rate} \mu(\pi),\canonical{\Y},\reduce{V}{(\pa,F)})\label{eq:aggrDeplRedDeplInProp}
	\end{align}
\end{proposition}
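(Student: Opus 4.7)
My plan is to prove both identities~(\ref{eq:aggrAccrRedAccrInProp}) and~(\ref{eq:aggrDeplRedDeplInProp}) by expanding each side using the definitions of $Accr$ and $Depl$ from Remark~\ref{rm:accrDepl} and grouping reactions by the $\mu$-images of their reactants. For the RHS of~(\ref{eq:aggrAccrRedAccrInProp}), I first observe that for any $\rho \act{\rate} \pi \in R$ with $\rho = \mu(\rho)$ we have $\mu(\pi)(\canonical{\Y}) = \pi(\Y)$, since $\mu$ sends each element of $\Y$ to $\canonical{\Y}$, and $\prod_{Y\in\mu(S)}(\reduce{V}{(\pa,F)})_Y^{\rho(Y)}=\prod_{\tilde\Y\in\pa} V_{\tilde\Y}^{\rho(\canonical{\tilde\Y})}$ because $(\reduce{V}{(\pa,F)})_{\canonical{\tilde\Y}} = V_{\tilde\Y}$ by construction. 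Splitting this sum by the arity of the (canonical) reactants and unfolding the definition of $\gr$ at canonical arguments yields $\sum_{\tilde\Y\in\pa} V_{\tilde\Y}\cdot\gr[\canonical{\tilde\Y},\emptyset,\Y] + \frac{1}{2}\sum_{\tilde\Y,\widehat\Y\in\pa} V_{\tilde\Y} V_{\widehat\Y}\cdot\gr[\canonical{\tilde\Y}, \canonical{\widehat\Y}, \Y]$, where the $\frac{1}{2}$ arises because ordered pairs of blocks double-count unordered reactant multisets, and the factor $(\rho(X)+1)$ baked into $\gr$ precisely compensates by contributing $2$ when $\canonical{\tilde\Y}=\canonical{\widehat\Y}$ and $1$ otherwise.

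For the LHS of~(\ref{eq:aggrAccrRedAccrInProp}) I will follow the computation already carried out in the proof of Proposition~\ref{prop:aggregDeplAndAccr}: split into unary and binary contributions, group species by blocks, apply the \FB{} condition on $\gr$ (together with Proposition~\ref{prop:commCRR} in the binary case) to canonicalize the first and second arguments of $\gr$, and recognize $V_{\tilde\Y} = \sum_{Y \in \tilde\Y} V_Y$ after pulling the common $\gr$ factor out of the inner sum over a block. The resulting closed form is identical to the expansion of the RHS derived above, which settles~(\ref{eq:aggrAccrRedAccrInProp}).

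For~(\ref{eq:aggrDeplRedDeplInProp}) the strategy is identical, except that the relevant \FB{} condition is the one on $\crr$. After the same grouping and canonicalization on the LHS we reach $V_\Y \cdot \crr[\canonical{\Y}, \emptyset] + V_\Y \sum_{\tilde\Y\in\pa} V_{\tilde\Y} \cdot \crr[\canonical{\Y}, \canonical{\tilde\Y}]$. On the other hand, since $Depl(\rho \act{\rate}\mu(\pi),\canonical{\Y},\reduce{V}{(\pa,F)}) = \rho(\canonical{\Y})\cdot\rate\cdot\prod_{\tilde\Y\in\pa}V_{\tilde\Y}^{\rho(\canonical{\tilde\Y})}$, the RHS is nonzero only for reactions of $R$ in which $\canonical{\Y}$ itself is among the (canonical) reactants, and unfolding the definition of $\crr$ at canonical arguments reproduces exactly the LHS expression, noting that $V_\Y$ appears as a block-sum factor only after using the \FB{} condition to absorb contributions of non-representative $X \in \Y$ on the LHS.

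The main obstacle I foresee is the combinatorial bookkeeping of the $(\rho(X)+1)$ factor in $\crr$ and $\gr$ when binary canonical reactions have both reactants coinciding (both equal to $\canonical{\tilde\Y}$): one must be careful to avoid either double-counting or under-counting when matching the symmetric $\sum_{\tilde\Y,\widehat\Y}$ form against the sum over canonical reactions of $R$ with unordered reactant multisets. This is a direct adaptation of the $\frac{1}{2}$-juggling already performed in the proof of Proposition~\ref{prop:aggregDeplAndAccr}, but it is the part of the argument that requires the most care.
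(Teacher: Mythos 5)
Your proposal is correct and follows essentially the same route as the paper's proof: the left-hand sides are reduced to the closed forms already derived in Proposition~\ref{prop:aggregDeplAndAccr} via the \FB{} conditions and Proposition~\ref{prop:commCRR}, while the right-hand sides are expanded by reactant arity using $\mu(\pi)(\canonical{\Y})=\sum_{Z\in\Y}\pi(Z)$ and matched term by term, with the $\tfrac{1}{2}$ versus $(\rho(X)+1)$ bookkeeping handled exactly as in the paper. The only cosmetic difference is that you spell out the $Depl$ case, which the paper dismisses as ``similar (actually simpler)''.
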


\begin{proof}
	We only address the $Accr$ case, as the $Depl$ one is similar (actually, the latter is simpler).
	By the proof of Proposition~\ref{prop:aggregDeplAndAccr} we obtain that we can rewrite the left-hand side of Equation~\eqref{eq:aggrAccrRedAccrInProp} as
	\begin{align}
	&\sum_{\tilde \Y \in \pa}\gr[Y^{\tilde \Y},\emptyset,\Y]\cdot  V_{\tilde \Y}\ +\label{eq:aggrAccrRedAccrInPropLHS1}\\
	&\quad\frac{1}{2}\sum_{\tilde \Y \in \pa} V_{\tilde \Y}\sum_{\widehat \Y \in \pa} V_{\widehat \Y}\cdot\gr[Y^{\widehat{\Y}},{Y^{\tilde \Y}},\Y]\label{eq:aggrAccrRedAccrInPropLHS2}
	\end{align}
	%
	Instead, we can rewrite the right-hand side of Equation~\eqref{eq:aggrAccrRedAccrInProp} as
	\begin{align}
	&
	\sum_{\stackrel{Y \act{\rate} \pi \in R}{Y=\mu(Y)}} \!\!\! \mu_\pi(\canonical{\Y}) \cdot \rate \cdot   V_Y^{(\pa,F)}
	\ +\label{eq:aggrAccrRedAccrInPropRHS1}\\
	&\quad
	\sum_{\stackrel{Y_1+Y_2 \act{\rate} \pi \in R}{Y_1=\mu(Y_1) \wedge Y_2=\mu(Y_2)}} \!\!\! \mu_\pi(\canonical{\Y}) \cdot \rate \cdot   V_{Y_1}^{(\pa,F)}\cdot V_{Y_2}^{(\pa,F)}
	\label{eq:aggrAccrRedAccrInPropRHS2}
	\end{align}
	We close the proof showing  that Equation~\eqref{eq:aggrAccrRedAccrInPropLHS1} is equal to  Equation~\eqref{eq:aggrAccrRedAccrInPropRHS1} and that Equation~\eqref{eq:aggrAccrRedAccrInPropLHS2} is equal to Equation~\eqref{eq:aggrAccrRedAccrInPropRHS2}.
	
	We focus on the first equality, which easily follows noting that  Equation~\eqref{eq:aggrAccrRedAccrInPropRHS1} can be rewritten as
	\begin{align*}
	\sum_{\tilde\Y\in\pa} V_{\canonical{\tilde\Y}}^{(\pa,F)}\!\!\!\sum_{\stackrel{Y \act{\rate} \pi \in R}{Y=\canonical{\tilde\Y}}} \!\!\! \mu_\pi(\canonical{\Y}) \cdot \rate
	=
	\sum_{\tilde\Y\in\pa}  V_{\tilde\Y}
	\!\!
	\sum_{Z\in\Y}
	\sum_{\stackrel{Y \act{\rate} \pi \in R}{Y=\canonical{\tilde\Y}}}\!\!
	\pi(Z) \cdot \rate
	=
	\sum_{\tilde\Y\in\pa}  V_{\tilde\Y} \cdot \gr[\canonical{\tilde\Y},\emptyset,\Y]
	\end{align*}

	We now consider the case ``Equation~\eqref{eq:aggrAccrRedAccrInPropLHS2} equals Equation~\eqref{eq:aggrAccrRedAccrInPropRHS2}''.
	We can rewrite Equation~\eqref{eq:aggrAccrRedAccrInPropRHS2} as follows:
	\begin{align}
	&\sum_{\stackrel{Y_1+Y_1 \act{\rate} \pi \in R}{Y_1=\mu(Y_1)}} \!\!\!\!\!\! \mu_\pi(\canonical{\Y}) \cdot \rate \cdot   V_{Y_1}^{(\pa,F)}\cdot V_{Y_1}^{(\pa,F)} +\label{eq:aggrAccrRedAccrInPropRHS2Y1Y1}\\
	&\quad\sum_{\stackrel{Y_1+Y_2 \act{\rate} \pi \in R, Y_1\not=Y_2}{Y_1=\mu(Y_1) \wedge Y_2=\mu(Y_2)}} \!\!\!\!\!\!\!\!\!\! \mu_\pi(\canonical{\Y}) \cdot \rate \cdot   V_{Y_1}^{(\pa,F)}\cdot V_{Y_2}^{(\pa,F)}\label{eq:aggrAccrRedAccrInPropRHS2Y1Y2}
	%
	%
	\end{align}
	Now, Equation~\eqref{eq:aggrAccrRedAccrInPropRHS2Y1Y1} can be rewritten as
	\begin{align}
	&\sum_{\tilde \Y\in\pa} V_{\tilde\Y}\cdot  V_{\tilde\Y} \sum_{\stackrel{Y_1+Y_1 \act{\rate} \pi \in R}{Y_1=\canonical{\tilde\Y}}} \!\!\!\!\!\! \mu_\pi(\canonical{\Y}) \cdot \rate=\notag\\
	&\sum_{\tilde \Y\in\pa} V_{\tilde\Y}\cdot  V_{\tilde\Y}\sum_{Z\in\Y} \sum_{\stackrel{Y_1+Y_1 \act{\rate} \pi \in R}{Y_1=\canonical{\tilde\Y}}} \!\!\!\!\!\!\pi(Z) \cdot \rate= \comInAlign{multiplying by $\frac{2}{2}$}\notag
	\\
	&
	\frac{1}{2}\sum_{\tilde \Y\in\pa} V_{\tilde\Y}\cdot  V_{\tilde\Y}\sum_{Z\in\Y} 2\sum_{\stackrel{Y_1+Y_1 \act{\rate} \pi \in R}{Y_1=\canonical{\tilde\Y}}} \!\!\!\!\!\!\pi(Z) \cdot \rate=\notag\\
	&
	\frac{1}{2}\sum_{\tilde \Y\in\pa} V_{\tilde\Y}\cdot  V_{\tilde\Y}\cdot\gr[\canonical{\tilde\Y},\canonical{\tilde\Y},\Y]\label{eq:aggrAccrRedAccrInPropRHS2Y1Y1-1}
	\end{align}
	Instead,
	Equation~\eqref{eq:aggrAccrRedAccrInPropRHS2Y1Y2} can be rewritten as, where we divide Equation~\eqref{eq:aggrAccrRedAccrInPropRHS2Y1Y2-1} by $2$ because each multiset is considered twice (e.g., $Y_1+Y_2$ and $Y_2+Y_1$):
	\begin{align}
	%
	%
	&\frac{1}{2}\sum_{\tilde \Y \in \pa} V_{\tilde\Y}\sum_{\widehat \Y \in \pa, \tilde \Y\not=\widehat\Y} V_{\widehat \Y}
	\sum_{\stackrel{Y_1+Y_2 \act{\rate} \pi \in R}{Y_1=\canonical{\tilde \Y}\wedge Y_2=\canonical{\widehat \Y}}} \!\!\!\!\!\! \mu_\pi(\canonical{\Y}) \cdot \rate=\label{eq:aggrAccrRedAccrInPropRHS2Y1Y2-1}\\
	&\frac{1}{2}\sum_{\tilde \Y \in \pa} V_{\tilde\Y}\sum_{\widehat \Y \in \pa, \tilde \Y\not=\widehat\Y} V_{\widehat \Y}
	\sum_{Z\in\Y}
	\sum_{\stackrel{Y_1+Y_2 \act{\rate} \pi \in R}{Y_1=\canonical{\tilde \Y}\wedge Y_2=\canonical{\widehat \Y}}} \!\!\!\!\!\!
	\pi(Z) \cdot \rate=\notag\\
	&\frac{1}{2}\sum_{\tilde \Y \in \pa} V_{\tilde\Y}\sum_{\widehat \Y \in \pa, \tilde \Y\not=\widehat\Y} V_{\widehat \Y}
	\cdot\gr[\canonical{\tilde \Y},\canonical{\widehat \Y},\Y]\label{eq:aggrAccrRedAccrInPropRHS2Y1Y2-2}
	\end{align}
	By summing  Equation~\eqref{eq:aggrAccrRedAccrInPropRHS2Y1Y1-1} and Equation~\eqref{eq:aggrAccrRedAccrInPropRHS2Y1Y2-2}, we finally rewrite Equation~\eqref{eq:aggrAccrRedAccrInPropRHS2} as
	\begin{align*}
	&\frac{1}{2}\sum_{\tilde Y\in\pa} V_{\tilde\Y}\cdot V_{\tilde\Y}\cdot\gr[\canonical{\tilde\Y},\canonical{\tilde\Y},\Y]+
	\frac{1}{2}\sum_{\tilde \Y \in \pa} V_{\tilde\Y}\sum_{\widehat \Y \in \pa, \tilde \Y\not=\widehat\Y} V_{\widehat \Y}
	\cdot\gr[\canonical{\tilde \Y},\canonical{\widehat \Y},\Y]=\\
	&\frac{1}{2}\sum_{\tilde \Y \in \pa} V_{\tilde\Y}\sum_{\widehat \Y \in \pa} V_{\widehat \Y}
	\cdot\gr[\canonical{\tilde \Y},\canonical{\widehat \Y},\Y]
	\end{align*}
	This closes the case, and thus the proof is complete.
\end{proof}

We now provide the proof of Theorem~\ref{th:reducedModelHasLumpedODEs}.

\begin{proof}[Theorem~\ref{th:reducedModelHasLumpedODEs}]
	We first address the correctness of the reduction.
	For any $\Y\in\pa$ we have
	\[
	\sum_{X\in\Y}F_X( V) = \sum_{X\in\Y}\sum_{\rho \act{\rate} \pi \in R} Accr(\rho \act{\rate} \pi,X, V) - \sum_{X\in\Y}\sum_{\rho \act{\rate} \pi \in R} Depl(\rho \act{\rate} \pi,X, V)
	\]
	As regards $\hat{F}$, by 
	Definition~\ref{def:OReduction}, for any $\Y \in \pa$ we obtain
	\begin{align*}
	\hat{F}_{\canonical{\Y}}(\reduce{V}{(\pa,F)}) = &\sum_{\stackrel{\rho \act{\rate} \pi \in R}{\rho=\mu(\rho)}} Accr(\rho \act{\rate} \mu(\pi),\canonical{\Y},\reduce{V}{(\pa,F)}) -
	\\&\quad
	\sum_{\stackrel{\rho \act{\rate} \pi \in R}{\rho=\mu(\rho)}} Depl(\rho \act{\rate} \mu(\pi),\canonical{\Y},\reduce{V}{(\pa,F)})
	\end{align*}
	We close the proof by showing that
	\begin{align}
	\sum_{X\in\Y}\sum_{\rho \act{\rate} \pi \in R} Accr(\rho \act{\rate} \pi,X, V) = \sum_{\stackrel{\rho \act{\rate} \pi \in R}{\rho=\mu(\rho)}} Accr(\rho \act{\rate} \mu(\pi),\canonical{\Y},\reduce{V}{(\pa,F)})\label{eq:aggrAccrRedAccr}\\
	\sum_{X\in\Y}\sum_{\rho \act{\rate} \pi \in R} Depl(\rho \act{\rate} \pi,X, V) = \sum_{\stackrel{\rho \act{\rate} \pi \in R}{\rho=\mu(\rho)}} Depl(\rho \act{\rate} \mu(\pi),\canonical{\Y},\reduce{V}{(\pa,F)})
	\label{eq:aggrDeplRedDepl}
	\end{align}
	%
	Both Equations~\eqref{eq:aggrAccrRedAccr} and~\eqref{eq:aggrDeplRedDepl} follow from Proposition~\ref{prop:aggrAccrRedAccr}.
	
	\medskip
	
	We now address the complexity of the reduction, showing that $\reduce{(S,R)}{(\pa,F)}$ can be performed in $\mathcal{O} \big( |R| \cdot |S| \cdot (\log(\abs{S})+\log(\abs{R}))\big)$ time.
	%
	
	Steps (F1) and (F2) of Definition~\ref{def:OReduction} require to iterate (once) the reactions, ($\mathcal{O}(|R|)$). In particular, for each reaction we have to in turn iterate its reagents and products to perform (F1) and (F2), respectively. This requires $\mathcal{O}(\abs{S})$ time. Finally, in order to efficiently perform (F3) we assume that the reagents and products are stored as an ordered list, and thus we have to sort the obtained canonized products ($\mathcal{O}(\abs{S}\cdot \log(\abs{S}))$).
	To sum up, steps (F1) and (F2) require $\mathcal{O} \big( |R| \cdot |S| \cdot \log(\abs{S})\big)$ time.
	
	Instead, step (F3) can be computed by first sorting the reactions obtained from (F1) and (F2), requiring $\mathcal{O}(\abs{R}\cdot \log(\abs{R})\cdot \abs{S})$, where the $\abs{S}$ factor comes from the fact that in order to compare two reactions it is necessary to scan (once) their reagents and products.
	Then, (F3) is  completed by iterating (once) the reactions to actually collapse them ($\mathcal{O}(|R|\cdot\abs{S})$).
\end{proof}

\subsection*{A.2\ Backward CRN Bisimulation}

\begin{proof}[Proof of Theorem \ref{thm_efl}]
	We first prove the if direction. Let $\mu : S \to S$ be some choice function of $\mathcal{H}$, set $\hat{S} := \mu(S)$ and define $G_{\hat{X}}(\hat{{V}}) := F_{\hat{X}}(\hat{{V}} \circ \mu)$ for any $\hat{{V}} \in \mathbb{R}^{\hat{S}}$ and $\hat{X} \in \hat{S}$. Further, let $\hat{{V}}$ denote the unique ODE solution of $\frac{d}{dt}{\hat{{V}}}(t) = G\big(\hat{{V}}(t)\big)$ subject to some given initial condition $\hat{{V}}(0)$. Then, for all $X \in S$, it holds that
	\begin{multline*}
	\frac{d}{dt}(\hat{{V}}(t) \circ \mu)_X = \Big(\frac{d}{dt}\hat{{V}}(t)\Big)_{\mu(X)}
	= G_{\mu(X)}\big(\hat{{V}}(t)\big) = F_{\mu(X)}(\hat{{V}}(t) \circ \mu) = F_{X}(\hat{{V}}(t) \circ \mu)
	\end{multline*}
	Thus, $t \mapsto \hat{{V}}(t) \circ \mu$ is the unique solution of the ODE system $\frac{d}{dt} {V}(t) = F({V}(t))$ subject to $\hat{{V}}(0) \circ \mu$. Since $t \mapsto \hat{{V}}(t) \circ \mu$ is constant on $\mathcal{H}$, the proof of the if direction is complete. We now turn to the proof of the only-if direction. For this, let us assume that $F$ is such that, for any $V(0) \in \RE_{\geq0}^S$ that is constant on $\pa$, the underlying solution of $\dot{V} = F(V)$ is constant on $\pa$ as well. Fix arbitrary $H \in \pa$, $X,Y \in H$ and $V(0) \in \RE_{\geq0}^S$ with $V(0)$ constant on $\pa$. Since the solution of $\dot{V} = F(V)$ is differentiable, there exist $\delta > 0$ and functions $r_X,r_Y$ such that $\lim_{h \searrow 0} \frac{1}{h} r_X(h) = \lim_{h \searrow 0} \frac{1}{h} r_Y(h) = 0$ and
\begin{align*}
V_X(h) & = V_X(0) + F_X(V(0)) \cdot h + r_X(h) &
V_Y(h) & = V_Y(0) + F_Y(V(0)) \cdot h + r_Y(h)
\end{align*}
for all $0 \leq h < \delta$. Since $V(0)$ is constant on $\pa$ and $X,Y \in H$ for some $H \in \pa$, this yields $\lim_{h \searrow 0} \frac{1}{h}(V_X(h) - V_Y(h)) = F_X(V(0)) - F_Y(V(0))$. Noting that $V_X(h) = V_Y(h)$ for all $0 \leq h < \delta$ because $V(0)$ is constant on $\pa$, we thus infer $F_X(V(0)) - F_Y(V(0)) = 0$.
\end{proof}


\begin{proof}[Proof of Proposition~\ref{prop:cefsb:ex}]
	We first note that $\R$ is an equivalence relation over $S$, as it is the transitive closure of the union of equivalence relations over $S$. For $i \in I$, set $\pa_i = S/ \R_i$ and $\pa = S / \R$. Note that, for all $i \in I$, any $\Y \in \pa$ is a union of blocks of $\pa_i$. From this, in turn, it is easy to see that any $\mathcal{M} \in \{ \rho \mid \rho \act{\alpha} \pi \in R \} /\!\!\! \approx_\pa$ can be written as a union of blocks of $\{ \rho \mid \rho \act{\alpha} \pi \in R \} /\!\!\!\approx_{\pa_i}$. Observe that, for any $(y_0,y_1) \in \R$, there exist $x_0,\ldots,x_k \in S$ such that $x_0 R_{i_0} x_1 R_{i_1} \ldots R_{i_{k-1}} x_k$ with $y_0 = x_0$, $y_1 = x_k$ and $i_j \in I$ for all $0 \leq j \leq k - 1$. Noting that $x_j R_{i_j} x_{j + 1}$ for all $0 \leq j \leq k - 1$, we infer that $\fr[x_j,\M] = \fr[x_{j+1},\M]$ for all $\M \in \{ \rho \mid \rho \act{\alpha} \pi \in R \} /\!\!\!\approx_{\pa_{i_j}}$. Thus, we infer that $\fr[x_0,\M] = \fr[x_1,\M] = \ldots = \fr[x_{k - 1},\M] = \fr[x_k,\M]$ for all $\M \in \{ \rho \mid \rho \act{\alpha} \pi \in R \} /\!\!\approx_{\pa}$. The remainder of the claim follows then from Lemma~\ref{lem_ex_cor_aux}.
\end{proof}

\begin{proof}[Proof of Theorem~\ref{th:efsbAndODELumping}]
	Define $\llbracket \rho \rrbracket_V := \prod_{X \in S} V_X^{\rho(X)}$ and set $Q := \{ \rho \mid \rho \act{\alpha} \pi \in R \} /\!\!\!\approx_\pa$. Fix some arbitrary $X_i,X_j \in H$ and $H \in \pa$ and note that
	\begin{align*}
	F_{X_k}(V) & = \sum_{\rho \act{\alpha} \pi \in R} \alpha (\pi(X_k) - \rho(X_k)) \llbracket \rho \rrbracket_V = \sum_{[\rho_0] \in Q} \sum_{\rho \in [\rho_0]} \fr(X_k,\rho) \llbracket \rho \rrbracket_V \\
	& = \sum_{[\rho_0] \in Q} \underbrace{\Big( \sum_{\rho \in [\rho_0]} \fr(X_k,\rho) \Big)}_{c(X_k,[\rho_0])} \llbracket \rho_0 \rrbracket_V
	\end{align*}
	whenever $V$ is constant on $\pa$. Observe also that the function $V \mapsto F_{X_k}(V)$, where $V$ is constant on $\pa$, defines a polynomial in $|Q|$ variables with the monomials $\{ c(X_k,[\rho_0]) \cdot \llbracket \rho_0 \rrbracket_V \mid [\rho_0] \in Q\}$. At last, recall that the multi dimensional version of Taylor's theorem implies that two real polynomials are equivalent if and only if they have the same monomials. Thus, $F_{X_i}(V) = F_{X_j}(V)$ for all $V$ that are constant on $\pa$ if and only if $c(X_i,[\rho_0]) = c(X_j,[\rho_0])$ for all $[\rho_0] \in Q$.
\end{proof}

\begin{proof}[Proof of Theorem \ref{th:e:reduction}]
	By encoding $\mu : S \to S$, $\rho$ and $\pi$ as arrays of length $|S|$, it is easy to see that the first operation needs at most $\mathcal{O}(|R| \cdot |S|)$ steps. For the second operation, note that the renaming of species according to $\mu$ can be done in again in $\mathcal{O}(|R| \cdot |S|)$. However, since elements of $\ms(S)$ are stored as ordered lists to allow for performant processing, the second operation needs $\mathcal{O}\big(|R| \cdot |S| \cdot \log(|S|)\big)$. To accomplish the third operation, instead, we first sort the reactions with respect to the lexicographical order which takes $\mathcal{O}(|R| \cdot \log(|R|) \cdot |S|)$. Afterwards, the rates of the reactions that coincide in reactants and products can be summed in $\mathcal{O}(|R| \cdot |S|)$.
	
	We now turn to the correctness of algorithm. Let $(X,{V}) \mapsto G^i_X({V})$ denote the vector field that arises from $R$ after applying (B$1$), \ldots, (B$i$), with $G^0$ being $F$ itself. We next prove that $G^i_{X_k}(\hat{{V}} \circ \mu_S) = F_{X_k}(\hat{{V}} \circ \mu_S)$ for all $\hat{{V}} \in \mathbb{R}^{\hat{S}}_{\geq0}$, $X_k \in \hat{S}$ and $i \in \{1,2\}$. For this, let us first apply the reaction changes $\rho \to^\alpha \pi \mapsto \rho \to^\alpha \tilde{\pi}$ of (B1). Then, $G^0_{X_k}(\hat{{V}} \circ \mu) = G^1_{X_k}(\hat{{V}} \circ \mu)$ because of $\pi(X_k) - \rho(X_k) = \tilde{\pi}(X_k) - \rho(X_k)$. Let us now consider a reaction changes $\rho \to^\alpha \pi \mapsto \tilde{\rho} \to^\alpha \tilde{\pi}$ of (B2). Then, $G^1_{X_k}(\hat{{V}} \circ \mu) = G^2_{X_k}(\hat{{V}} \circ \mu)$ because $\prod_{X \in S} (\hat{{V}} \circ \mu)_X^{\rho(X)} = \prod_{X \in S} (\hat{{V}} \circ \mu)_X^{\tilde{\rho}(X)}$ and $\pi(X_k) - \rho(X_k) = \tilde{\pi}(X_k) - \tilde{\rho}(X_k)$. Since $G^2_{X_k}(\hat{{V}} \circ \mu) = G^3_{X_k}(\hat{{V}} \circ \mu)$ is trivially true, we infer the claim.
\end{proof}

\subsection*{A.3\ Partition Refinement}

The following auxiliary results will be needed to prove the correctness of Algorithm~\ref{algorithm_cfsb}.

\begin{lemma}\label{lem_key}
	Given a CRN $(S,R)$, let $\mathcal{H}_1, \mathcal{H}_2$ be two partitions of $S$ such that $\mathcal{H}_1$ is a refinement of $\mathcal{H}_2$. Then, the following holds true.
	\begin{enumerate}
		\item $X_i \sim_{\mathcal{H}_1}^\OSB{} X_j$ implies $X_i \sim_{\mathcal{H}_2}^\OSB{} X_j$.
		\item $X_i \sim_{\mathcal{H}_1}^\ESB{} X_j$ implies $X_i \sim_{\mathcal{H}_2}^\ESB{} X_j$.
	\end{enumerate}
\end{lemma}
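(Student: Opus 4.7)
The plan is to exploit the fact that both the production rate $\gr[\cdot,\rho,H]$ and the cumulative flux rate $\fr[\cdot,\M]$ are additive in their set/class argument, so refining a partition only splits the relevant summation domains into smaller pieces. Agreement over every smaller piece therefore entails agreement over their unions by block-wise summation, which is exactly what is needed to pass from $\mathcal{H}_1$ to $\mathcal{H}_2$.

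For part~1, I would fix $X_i,X_j$ with $X_i \sim_{\mathcal{H}_1}^\OSB{} X_j$ and treat the two clauses of~\eqref{eq_i_and_ii} separately. The condition $\crr[X_i,\rho] = \crr[X_j,\rho]$ is independent of the partition, so it transfers verbatim to $\mathcal{H}_2$. For the production-rate clause, given any $H_2 \in \mathcal{H}_2$ the refinement hypothesis yields a decomposition $H_2 = \bigsqcup_{k} H_1^{(k)}$ into $\mathcal{H}_1$-blocks. By Definition~\ref{def:conditionalReactionRate}, $\gr[X,\rho,H_2] = \sum_k \gr[X,\rho,H_1^{(k)}]$ for any $X$, and summing the hypothesis $\gr[X_i,\rho,H_1^{(k)}] = \gr[X_j,\rho,H_1^{(k)}]$ over $k$ closes the case.

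For part~2, my first step will be to verify that $\approx_{\mathcal{H}_2}$ is coarser than $\approx_{\mathcal{H}_1}$ on $\MS(S)$. This reduces to observing that $\rho \approx_\mathcal{H} \sigma$ is equivalent to $\rho$ and $\sigma$ having identical total multiplicities on every block of $\mathcal{H}$, since the choice function $\mu$ applied element-wise collapses each species to its block representative. Because every $\mathcal{H}_2$-block is a disjoint union of $\mathcal{H}_1$-blocks, agreement of multiplicities at the finer granularity propagates to the coarser one. Consequently, any $\approx_{\mathcal{H}_2}$-class $\M_2$ of reaction reactants splits as $\M_2 = \bigsqcup_{k} \M_1^{(k)}$ with each $\M_1^{(k)}$ a $\approx_{\mathcal{H}_1}$-class of reactants, and Definition~\ref{def:cumulativeFluxRate} yields $\fr[X,\M_2] = \sum_k \fr[X,\M_1^{(k)}]$. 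Summing the hypothesis $\fr[X_i,\M_1^{(k)}] = \fr[X_j,\M_1^{(k)}]$ over $k$ gives the desired $\fr[X_i,\M_2] = \fr[X_j,\M_2]$.

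I do not anticipate a genuine obstacle. The only subtle point is that $\approx_\mathcal{H}$ is defined via a choice function rather than directly in terms of blocks, so some care is needed to recast it as a block-multiplicity condition before comparing the two levels of refinement; once this reformulation is in place, both parts reduce to straightforward bookkeeping.
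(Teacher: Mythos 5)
Your proposal is correct and follows essentially the same route as the paper's proof: the $\crr$ clause is partition-independent, and both $\gr[\cdot,\rho,\cdot]$ and $\fr[\cdot,\cdot]$ are additive over the decomposition of each coarser block (respectively, each $\approx_{\mathcal{H}_2}$-class) into finer ones. The only difference is that you explicitly justify why every $\approx_{\mathcal{H}_2}$-class of reactants is a disjoint union of $\approx_{\mathcal{H}_1}$-classes via the block-multiplicity reformulation of $\approx_{\mathcal{H}}$, a step the paper asserts without detail; your justification is sound.
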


\begin{proof}
	Let us assume that $X_i \sim_{\mathcal{H}_1}^\OSB{} X_j$, i.e. it holds that $\crr[X_i,\rho]= \crr[X_j,\rho]$ and $\gr[X_i,\rho,H_1] = \gr[X_j,\rho,H_1]$ for all $\rho \in \ms(S)$ and $H_1 \in \pa_1$. Since any $H_2 \in \pa_2$ can be written as a disjoint union of blocks from $\pa_1$, we thus infer that $X_i \sim_{\mathcal{H}_2}^\OSB{} X_j$. Let us now assume that $X_i \sim_{\mathcal{H}_1}^\ESB{} X_j$, i.e. it holds that $\fr[X_i,\M_1] = \fr[X_j,\M_1]$ for all $\M_1 \in \{ \rho \mid \rho \act{\alpha} \pi \in R \} /\!\!\approx_{\pa_1}$. Since any $\M_2 \in \{ \rho \mid \rho \act{\alpha} \pi \in R \} /\!\! \approx_{\pa_2}$ can be written as a disjoint union of blocks from $\{ \rho \mid \rho \act{\alpha} \pi \in R \} /\!\!\approx_{\pa_1}$, we deduce that $X_i \sim_{\mathcal{H}_2}^\ESB{} X_j$.
\end{proof}

\begin{lemma}\label{lem_key_2}
	Let $(S,R)$ be a CRN and $\pa$ a partition of $S$. Then, the following holds.
	\begin{enumerate}
		\item $\pa$ is an \FB{} if and only if $X_i \sim_\pa^\OSB{} X_j$ for all $X_i,X_j \in H$ and $H \in \pa$. Moreover, it holds that $\pa$ is an \FB{} if and only if $\pa = S / (\sim_\pa^\OSB{} \cap \sim_\pa)$.
		\item $\pa$ is a \BB{} if and only if $X_i \sim_\pa^\ESB{} X_j$ for all $X_i,X_j \in H$ and $H \in \pa$. Moreover, it holds that $\pa$ is a \BB{} if and only if $\pa = S / (\sim_\pa^\ESB{} \cap \sim_\pa)$.
	\end{enumerate}
\end{lemma}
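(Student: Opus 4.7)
The proof is a direct unfolding of the three definitions involved: Definition~\ref{def:dsb} (\FB{}), Definition~\ref{def:efsb} (\BB{}), and the splitter equivalences $\sim_\pa^\OSB{}, \sim_\pa^\ESB{}$. There is no real technical obstacle; the only care needed is to keep straight the interplay between the relation $\sim_\pa$ (whose classes form $\pa$) and the splitter relations, which by construction themselves depend on $\pa$.

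\medskip

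\noindent\textbf{First part of each item.} I would treat the two items in parallel, letting $\chi \in \{F, B\}$ and writing the argument once. Let $\R$ denote the equivalence relation with $S/\R = \pa$. By Definition~\ref{def:dsb} (respectively Definition~\ref{def:efsb}), $\pa$ is an \FB{} (resp.\ a \BB{}) if and only if condition~(\ref{eq_i_and_ii}) (resp.\ (\ref{eq_iii})) holds for every pair $(X_i,X_j) \in \R$. Since by the definition of the splitter equivalences $X_i \sim_\pa^\chi X_j$ is precisely the assertion that the corresponding condition holds for the pair $(X_i, X_j)$, the two formulations are literally the same statement, once we observe that pairs in $\R$ are exactly the pairs $(X_i, X_j)$ lying in a common block $H \in \pa$.

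\medskip

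\noindent\textbf{``Moreover'' parts.} For these, I would reduce the partition equality $\pa = S/(\sim_\pa^\chi \cap \sim_\pa)$ to the inclusion $\sim_\pa \subseteq \sim_\pa^\chi$. Since $\sim_\pa^\chi \cap \sim_\pa$ is always contained in $\sim_\pa$, the partition induced by the intersection always refines $\pa$; equality holds exactly when the reverse inclusion $\sim_\pa \subseteq \sim_\pa^\chi \cap \sim_\pa$ is valid, i.e.\ when $X_i \sim_\pa X_j$ implies $X_i \sim_\pa^\chi X_j$. By the first part this is just the bisimulation condition, so both ``moreover'' statements follow.

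\medskip

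\noindent\textbf{Main obstacle.} There is no substantive obstacle; the proof is essentially bookkeeping. The only subtle point worth noting explicitly in the write-up is that the splitter relations $\sim_\pa^\chi$ are themselves parametrized by $\pa$ on the right-hand sides (via the blocks $\Y \in \pa$ appearing in $\gr[X,\rho,\Y]$, and via the equivalence $\approx_\pa$ on multisets used in $\fr[X,\M]$), so the ``fixed point'' flavor of the equivalence $\pa = S/(\sim_\pa^\chi \cap \sim_\pa)$ is what characterizes bisimilarity: one checks the splitter condition against the very partition one is trying to certify.
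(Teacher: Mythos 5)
Your proposal is correct and follows essentially the same route as the paper: the first parts are a direct unfolding of Definitions~\ref{def:dsb} and~\ref{def:efsb} against the splitter equivalences, and the ``moreover'' parts reduce the partition equality to the inclusion $\sim_\pa\ \subseteq\ \sim_\pa^\chi$, which is exactly the paper's observation that $\pa = S/(\sim_\pa^\chi \cap \sim_\pa)$ holds if and only if $\pa$ refines $S/\!\sim_\pa^\chi$.
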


\begin{proof}
	The first parts of 1. and 2. are straightforward. The second parts, instead, follow from the corresponding first parts by observing that $\mathcal{H} = S / (\sim^\chi_\mathcal{H} \cap \sim_\mathcal{H})$ if and only if $\mathcal{H}$ is a refinement of $S / \sim^\chi_\mathcal{H}$.
\end{proof}

The following auxiliary results will be needed to establish polynomial complexity of Algorithm~\ref{algorithm_cfsb}. We implement $\rho \in \ms(S)$ as maps with keys in $S$. Moreover, we store a partition of a set $A$ by means of a map that associates to each element of $A$ a pointer to its representative. That is, each element of a partition block has a pointer to the representative of the block.

\begin{algorithm}[t!]
	\caption{Algorithm to calculate the quotient $A / \sim$. We assume that $A$ is implemented as an array of objects $\{a[1],\ldots,a[n]\}$ where each object contains, among the actual data, a pointer that is initialized with zero at the beginning.}\label{algorithm_comp_quot}
	\begin{algorithmic}
		\REQUIRE A set $A$ and an equivalence relation $\sim$ on $A$.
		\FOR{$i = 1$ to $n$}
		\IF{$a[i].p \text{ != \textbf{null}}$}
		\STATE \text{\textbf{continue}}
		\ENDIF
		
		\STATE $a[i].p \leftarrow \& a[i]$
		
		\FOR{$j = i + 1$ to $n$}
		\IF{$a[j].p \text{ == \textbf{null} \&\& } a[j] \sim a[i]$}
		\STATE $a[j].p \leftarrow \& a[i]$
		\ENDIF
		\ENDFOR
		\ENDFOR
	\end{algorithmic}
\end{algorithm}

\begin{lemma}\label{lem_comp_quot}
	Fix a CRN $(S,R)$, pick $A \in \{S,R\}$ and assume that deciding $a_1 \sim a_2$ for some equivalence relation $\sim$ on $A$ can be done in $\mathcal{O}(|R|^{e_1} \cdot |S|^{e_2})$ steps. Then, $A / \sim$ can be calculated in $\mathcal{O}(|A|^2 \cdot |R|^{e_1} \cdot |S|^{e_2})$ steps.
\end{lemma}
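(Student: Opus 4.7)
The plan is to verify that Algorithm~\ref{algorithm_comp_quot} correctly computes the quotient $A/\!\!\sim$ within the claimed time bound. The algorithm scans the array once with an outer loop on $i$: whenever it meets an element $a[i]$ whose pointer is still \textbf{null}, it promotes $a[i]$ to be the representative of a fresh block by setting $a[i].p \leftarrow \&a[i]$, and then an inner loop on $j>i$ assigns $a[j].p \leftarrow \&a[i]$ for every yet-unassigned $a[j]$ that satisfies $a[j]\sim a[i]$. At termination, two elements belong to the same block of $A/\!\!\sim$ precisely when their pointers coincide.

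For correctness, I would argue two directions. First, any two elements $a[j_1],a[j_2]$ that end up pointing to the same $a[i]$ satisfy $a[j_1]\sim a[i]\sim a[j_2]$ by the inner-loop test, so by symmetry and transitivity of $\sim$ they are equivalent. Second, I must show that every equivalence class is collected under one representative. For this, fix a class $C\subseteq A$ and let $a[i]$ be its element of smallest index. When $i$ is reached in the outer loop, $a[i].p$ is still \textbf{null}: indeed, if it had been set earlier to some $a[i']$ with $i'<i$, then the inner-loop test would have required $a[i]\sim a[i']$, forcing $a[i']\in C$ and contradicting the minimality of $i$. Hence $a[i]$ becomes the representative of $C$, and for every other $a[j]\in C$ with $j>i$, by the same minimality argument $a[j].p$ is still \textbf{null} when the inner loop reaches $j$, so the test $a[j]\sim a[i]$ succeeds and $a[j].p$ is set to $\&a[i]$. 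Thus each class is assembled under a unique representative.

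For the complexity, the two nested loops perform at most $|A|^2$ iterations in total (more precisely $|A|(|A|-1)/2$ equivalence tests, plus $\mathcal{O}(|A|)$ pointer initializations). Pointer reads, writes, and the \textbf{null}-check are constant-time under the assumed array-of-objects representation. The dominant cost per iteration is the evaluation of $a[j]\sim a[i]$, which by hypothesis runs in $\mathcal{O}(|R|^{e_1}\cdot|S|^{e_2})$ steps. Multiplying gives the claimed bound $\mathcal{O}(|A|^2\cdot|R|^{e_1}\cdot|S|^{e_2})$.

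The only subtle point, and arguably the ``hard part'' of the argument, is the invariant used in the correctness proof: when the outer loop reaches the smallest-indexed element $a[i]$ of a class $C$, no element of $C$ has yet been assigned a pointer. This is where transitivity of $\sim$ is essential: without it, an earlier representative $a[i']$ could have drawn $a[i]$ into its block while leaving some later $a[j]\in C$ unassigned, and the algorithm would then split $C$ across two representatives. Once this invariant is established, the rest of the proof is a routine bookkeeping of the loop structure.
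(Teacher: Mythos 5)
Your proposal is correct and follows the same approach as the paper: the paper's proof simply asserts that Algorithm~\ref{algorithm_comp_quot} computes $A/\!\!\sim$ within the stated bound (``it can be easily seen''), and you have carried out that verification in full, including the minimality invariant needed for the correctness direction. Your argument matches the intended one, just spelled out in detail.
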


\begin{proof}
	It can be easily seen that Algorithm~\ref{algorithm_comp_quot} calculates $A / \sim$ in $\mathcal{O}(|A|^2 \cdot |R|^{e_1} \cdot |S|^{e_2})$.
\end{proof}

\begin{lemma}\label{lem_comp_o}
	For a CRN $(S,R)$ and a partition $\pa$ of $S$, deciding $X \sim_\pa^\OSB{} Y$ can be done in $\mathcal{O}(|R|^2 \cdot |S|^2)$.
\end{lemma}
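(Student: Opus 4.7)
The first hurdle is that the splitter equivalence $\sim_\pa^\OSB{}$ quantifies universally over $\rho \in \ms(S)$, which is an infinite set; my plan is to reduce this quantifier to a finite set of size $O(|S|)$ by exploiting the elementary-reactions assumption. For each $Z \in S$ let $M(Z) := \{\rho \in \ms(S) \mid Z + \rho \act{\alpha} \pi \in R \text{ for some } \alpha, \pi\}$. From Definition~\ref{def:conditionalReactionRate} it is immediate that $\crr[Z,\rho] = 0$ and $\gr[Z,\rho,\Y] = 0$ whenever $\rho \notin M(Z)$, so~(\ref{eq_i_and_ii}) is satisfied trivially for any $\rho \notin M(X) \cup M(Y)$ and only the candidates in this union need to be examined. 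Because reactions are elementary (at most two reactants), every $\rho \in M(X) \cup M(Y)$ has $|\rho| \leq 1$, hence $M(X) \cup M(Y) \subseteq \{\emptyset\} \cup \{\multiset{Z} \mid Z \in S\}$ and there are at most $|S|+1$ relevant candidates.

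Next I would describe how to evaluate, for a fixed candidate $\rho$, the scalars $\crr[X,\rho], \crr[Y,\rho]$ and the vectors $(\gr[X,\rho,\Y])_{\Y \in \pa}, (\gr[Y,\rho,\Y])_{\Y \in \pa}$. A single sweep over $R$ suffices: for each reaction $\rho' \act{\alpha} \pi \in R$ test in $O(|S|)$ time whether $\rho' = X + \rho$ or $\rho' = Y + \rho$, and if so scan its product multiset $\pi$, using a precomputed block map induced by $\pa$ to look up, in constant time per species, the block $\Y$ of each product and accumulate $\alpha \cdot \pi(Y)$ into an array indexed by $\pa$. Summing the rates $\alpha$ of matching reactions yields the two $\crr$-values in the same sweep. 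Each reaction thus contributes $O(|S|)$ work, so one candidate $\rho$ is handled in $O(|R| \cdot |S|)$ time; the subsequent pointwise comparison of the two $\crr$-scalars and the two $\gr$-vectors across all $|\pa| \leq |S|$ blocks adds only $O(|S|)$.

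Running this procedure over the $O(|S|)$ candidate multisets yields a total cost of $O(|R| \cdot |S|^2)$, which is comfortably within the claimed $O(|R|^2 \cdot |S|^2)$ budget. The most delicate part of the argument is the first step, i.e.\ the passage from an infinite universal quantifier to a finite candidate list. This reduction rests on two ingredients: the vanishing of $\crr[\cdot,\rho]$ and $\gr[\cdot,\rho,\cdot]$ outside $M(X) \cup M(Y)$, which is immediate from the defining summations, and the elementary-reaction hypothesis, which is essential in order to bound $|M(X) \cup M(Y)|$ by $|S|+1$; without the latter one would only get the weaker bound $|M(Z)| \leq |R|$, which still suffices for the polynomial claim but loses the sharp $O(|S|)$ candidate count.
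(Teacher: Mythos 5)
Your proof is correct, and its core idea coincides with the paper's: the universal quantifier over $\rho \in \ms(S)$ is harmless because $\crr[\cdot,\rho]$ and $\gr[\cdot,\rho,\cdot]$ vanish outside the set of partner multisets occurring in $R$, so only finitely many candidates need checking. You depart from the paper in two refinements. First, the paper bounds the candidate set $\mathcal{D}(X) \cup \mathcal{D}(Y)$ only by $\mathcal{O}(|R|)$, whereas you invoke the elementary-reaction hypothesis to observe that every candidate $\rho$ satisfies $|\rho| \le 1$, giving at most $|S|+1$ candidates; the paper does not use elementarity in this lemma at all. Second, for a fixed $\rho$ the paper computes $\gr[X,\rho,Z]$ separately for each block (an extra $\mathcal{O}(|S|)$ factor of sweeps over $R$), whereas you accumulate the whole vector $(\gr[X,\rho,\Y])_{\Y \in \pa}$ in a single pass over $R$ using a block-indexed array. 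Together these give you $\mathcal{O}(|R| \cdot |S|^2)$ rather than the paper's $\mathcal{O}(|R|^2 \cdot |S|^2)$ — a strictly sharper bound that of course still establishes the stated lemma, and which would propagate to a better overall bound for Algorithm~\ref{algorithm_cfsb} if carried through Lemma~\ref{lem_comp_quot} and Theorem~\ref{thm_cfsb}. The only caveat is that your sharper candidate count genuinely depends on the elementary-reactions restriction, as you correctly note; the paper's argument, being agnostic to reactant arity, would survive unchanged if that restriction were dropped.
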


\begin{proof}
	We first note that, for a given $\rho \in \ms(S)$, deciding $\crr[X,\rho] = \crr[Y,\rho]$ can be done in $\mathcal{O}(|R| \cdot |S|)$ because the comparison of two $\rho_1,\rho_2 \in \ms(S)$ needs $\mathcal{O}(|S|)$. Similarly, for given $\rho \in \ms(S)$ and $Z \in S$, the calculation of $\gr[X,\rho,Z]$ can be done in $\mathcal{O}(|R| \cdot |S|)$. Note also that $\mathcal{D}(Z) := \{ \rho \mid \exists Z + \rho \act{\alpha} \pi \in R \}$ can be calculated in $ \mathcal{O}(|R| \cdot |S|)$ steps. We are now in a position to infer the claim. For this, note that $\crr[X,\rho] = \crr[Y,\rho]$ for all $\rho \in \ms(S)$ if $\crr[X,\rho] = \crr[Y,\rho]$ for all $\rho \in \mathcal{D}(X) \cup \mathcal{D}(Y)$. Consequently, deciding whether $\crr[X,\rho] = \crr[Y,\rho]$ for all $\rho \in \ms(S)$ can be done in $\mathcal{O}(|R|^2 \cdot |S|)$ steps. Similarly, we note that $\gr[X,\rho,H] = \gr[Y,\rho,H]$ for all $\rho \in \ms(S)$ and $H \in \pa$ if $\gr[X,\rho,H] = \gr[Y,\rho,H]$ for all $\rho \in \mathcal{D}(X) \cup \mathcal{D}(Y)$ and $H \in \pa$. Thus, deciding whether $\gr[X,\rho,H] = \gr[Y,\rho,H]$ for all $\rho \in \ms(S)$ and $H \in \pa$ holds true can be done in $\mathcal{O}(|R|^2 \cdot |S|^2)$.
\end{proof}

\begin{lemma}\label{lem_comp_e}
	For a CRN $(S,R)$ and a partition $\pa$ of $S$, deciding $X \sim_\pa^\ESB{} Y$ can be done in $\mathcal{O}(|R|^2 \cdot |S|)$.
\end{lemma}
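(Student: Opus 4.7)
The plan is to reduce the check to computing, for each $\approx_\pa$-class $\mathcal{M}$ of reactant multisets occurring in $R$, the cumulative flux rates $\fr[X,\mathcal{M}]$ and $\fr[Y,\mathcal{M}]$ and comparing them. The key observation is that only classes $\mathcal{M}$ containing at least one reactant of some reaction contribute, since $\fr(X,\rho) = 0$ whenever no reaction has $\rho$ as reactant. Therefore at most $|R|$ classes need to be inspected, and each reaction of $R$ contributes to exactly one of them, namely the one represented by $\mu(\rho)$.

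First, I would canonicalize. For each $\rho \act{\alpha} \pi \in R$, I compute $\mu(\rho)$ by applying $\mu$ elementwise. Storing $\mu$ as an array indexed by species makes each lookup $\mathcal{O}(1)$, so obtaining $\mu(\rho)$ costs $\mathcal{O}(|\rho|) = \mathcal{O}(|S|)$, for $\mathcal{O}(|R| \cdot |S|)$ in total. Then I would group the reactions into $\approx_\pa$-classes by pairwise comparison of their canonical reactants: storing each multiset as a sorted list of length at most $|S|$, comparing two such lists costs $\mathcal{O}(|S|)$, and grouping the $|R|$ reactions this way costs $\mathcal{O}(|R|^2 \cdot |S|)$. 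This yields a partition of $R$ into at most $|R|$ groups, each tagged by its representative canonical reactant.

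With the grouping in place, for each class $\mathcal{M}$ I would walk its reactions once, accumulating $\fr[X,\mathcal{M}] = \sum (\pi(X) - \rho(X)) \cdot \alpha$ and the analogous sum for $Y$. Since $\rho$ and $\pi$ admit $\mathcal{O}(1)$ multiplicity lookups when stored as maps keyed by species, each contribution is constant time; because the classes partition $R$, the total accumulation across all classes is $\mathcal{O}(|R|)$. Comparing $\fr[X,\mathcal{M}]$ to $\fr[Y,\mathcal{M}]$ for each of the at most $|R|$ classes adds another $\mathcal{O}(|R|)$. Summing the three phases yields the claimed $\mathcal{O}(|R|^2 \cdot |S|)$ bound.

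The only delicate phase is the grouping of reactions by $\approx_\pa$; the other phases are plain linear sweeps over $R$. A finer bound of $\mathcal{O}(|R| \cdot |S| \cdot \log|R|)$ could be obtained by lexicographically sorting the canonical reactants instead of comparing them pairwise, but the quadratic bound claimed here already suffices to establish the polynomial complexity of Algorithm~\ref{algorithm_cfsb} asserted in Theorem~\ref{thm_cfsb}.
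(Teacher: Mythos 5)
Your proposal is correct and follows essentially the same route as the paper: compute the quotient of the occurring reactant multisets under $\approx_\pa$ by pairwise $\mathcal{O}(|S|)$ comparisons of canonicalized reactants (the dominant $\mathcal{O}(|R|^2\cdot|S|)$ phase), then accumulate and compare the cumulative flux rates classwise. Your accumulation phase is organized slightly more efficiently as a single sweep over $R$, but this does not change the bound or the structure of the argument.
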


\begin{proof}
	Note that we can decide whether $\rho \approx_\pa \sigma$ in $\mathcal{O}(|S|)$. Thus, since $\mathcal{R} = \{ \rho \mid \rho \act{\alpha} \pi \in R\}$ can be calculated in $\mathcal{O}(|R| \cdot |S|)$ steps, Lemma~\ref{lem_comp_quot} implies that $Q = \mathcal{R} / \approx_\pa$ can be calculated in $\mathcal{O}(|R|^2 \cdot |S|)$. Moreover, we note that, for a given $\rho \in \ms(S)$, deciding whether $\fr[X,\rho] = \fr[Y,\rho]$ can be done in $\mathcal{O}(|R| \cdot |S|)$. Consequently, deciding whether $X \sim_\pa^\ESB{} Y$ can be done in $\mathcal{O}(|R|^2 \cdot |S|)$.
\end{proof}

We now are in a position to prove Theorem~\ref{thm_cfsb}.

\begin{proof}[Proof of Theorem \ref{thm_cfsb}]
	For the proof of correctness, let us assume that $\mathcal{H}$ denotes the coarsest bisimulation that refines $\mathcal{H}_0 := \mathcal{G}$ and define $\mathcal{H}_{k + 1} := S / (\sim^\chi_{\mathcal{H}_k} \cap \sim_{\mathcal{H}_k})$ for all $k \geq 0$. Then, the sequence $\mathcal{H}_0, \mathcal{H}_1, \mathcal{H}_2, \ldots$ is such that
	\begin{enumerate}
		\item $\mathcal{H}$ is a refinement of $\mathcal{H}_k$
		\item $\mathcal{H}_k$ is a refinement of $\mathcal{H}_{k - 1}$
	\end{enumerate}
	for all $k \geq 1$. We prove this by induction on $k$.
	\begin{itemize}
		\item $k = 1$: Since $\mathcal{H}$ is a refinement of $\mathcal{H}_0$, Lemma~\ref{lem_key} ensures the first claim. The second claim is trivial.
		\item $k \to k + 1:$ Thanks to the fact that $\mathcal{H}$ is a refinement of $\mathcal{H}_k$ by induction, Lemma~\ref{lem_key} ensures the first claim. The second claim is trivial.
	\end{itemize}
	Since $\mathcal{H}$ is a refinement of any $\mathcal{H}_k$, it holds that $\mathcal{H} = \mathcal{H}_k$ whenever $\mathcal{H}_k$ is a bisimulation. Thanks to the fact that $\mathcal{H}_k$ is a refinement of $\mathcal{H}_{k - 1}$ for all $k \geq 1$ and $S$ is finite, we can fix the smallest $k \geq 1$ such that $\mathcal{H}_k = \mathcal{H}_{k - 1}$. Since this implies $\mathcal{H}_{k - 1} = \mathcal{H}_k = S / (\sim^\chi_{\mathcal{H}_{k - 1}} \cap \sim_{\mathcal{H}_{k-1}})$, Lemma~\ref{lem_key_2} yields the claim. We now turn to the complexity analysis. Note that since deciding $X \sim_\pa Y$ needs a constant number of steps, Lemma~\ref{lem_comp_o} and Lemma~\ref{lem_comp_e} imply that $X \sim^\chi_\pa Y$ can be decided in $\mathcal{O}(|R|^2 \cdot |S|^2)$ steps. Thus, Lemma~\ref{lem_comp_quot} ensures that $S / \sim^\chi_\pa$ can be calculated in $\mathcal{O}(|R|^2 \cdot |S|^4)$ steps. Noting that the algorithm makes at most $|S|$ iterations, we conclude that the algorithm needs at most $\mathcal{O}(|R|^2 \cdot |S|^5)$ steps.
\end{proof}

\subsection*{A.4\ $\kappa$-encodings discussed in Section~\ref{sec:caseStudies}}


%


\paragraph*{M1-M4}
We start providing the $\kappa$-encoding of model M4 of \autoref{table:reductionResults}, whose original (BNGL) version has been taken from~\cite{citeulike:8493139}.
M4 is a special case in which almost no changes are necessary to convert it in $\kappa$.
The encoding is given in \autoref{ls:kappaEncodingM8}, where we omit unnecessary details. Each $\kappa$ rule is preceded by the corresponding original BNGL rule (starting with $\#$).
The encodings for all other models of the same benchmark, M1-M3, are similar and thus we omit them here. They are available for download at \url{http://sysma.imtlucca.it/crnreducer/}. 

\lstset{basicstyle=\ttfamily\scriptsize,breaklines=true,caption=Encoding of M4 from BNGL into $\kappa$,  label=ls:kappaEncodingM8,numbers=left,morecomment=[l]{\#}}
\begin{lstlisting}[mathescape,float=t]
###  SITE 1 ###
#E(s) + S(p1~U) <-> E(s!1).S(p1~U!1)
E(s) , S(p1~U) <-> E(s!1),S(p1~U!1)

#E(s!1).S(p1~U!1) -> E(s) + S(p1~P)
E(s!1),S(p1~U!1) -> E(s) , S(p1~P)

#F(s) + S(p1~P) <-> F(s!1).S(p1~P!1)
F(s) , S(p1~P) <-> F(s!1),S(p1~P!1)

#F(s!1).S(p1~P!1) -> F(s) + S(p1~U)
F(s!1),S(p1~P!1) -> F(s) , S(p1~U)


###  SITE 2 ###
#E(s) + S(p2~U) <-> E(s!1).S(p2~U!1)
E(s) , S(p2~U) <-> E(s!1),S(p2~U!1)

#E(s!1).S(p2~U!1) -> E(s) + S(p2~P)
E(s!1),S(p2~U!1) -> E(s) , S(p2~P)

#F(s) + S(p2~P) <-> F(s!1).S(p2~P!1)
F(s) , S(p2~P) <-> F(s!1),S(p2~P!1)

#F(s!1).S(p2~P!1) -> F(s) + S(p2~U)
F(s!1),S(p2~P!1) -> F(s) , S(p2~U)
\end{lstlisting}

\paragraph*{M6-M7}
As discussed in Section~\ref{sec:caseStudies}, models M6 and M7 are not directly encodable in $\kappa$, as they contain the molecule $Lig(l,l)$ having two identical binding sites $l$, which is forbidden in $\kappa$.
Hence, we encoded in $\kappa$ an expansion of the models similar to that of Equation~\eqref{eq:expansionIdenticalSite}.
For presentation reasons, we do not provide the $\kappa$-encodings of the expansions of M6 and M7, which however are available for download at \url{http://sysma.imtlucca.it/crnreducer/}.

\paragraph*{M12}
We conclude this appendix providing in \autoref{ls:kappaEncodingM22} the $\kappa$-encoding of M12, whose BNGL version has been taken from~\cite{Kocieniewski2012116}. As for M4, each $\kappa$ rule is preceded by the corresponding original BNGL rule.

\lstset{caption=Encoding of M12 from BNGL to $\kappa$,  label=ls:kappaEncodingM22,numbers=left,morecomment=[l]{\#}}
\begin{lstlisting}[mathescape,float=t]
#MAP3K(s,S~I) -> MAP3K(s,S~A)
MAP3K(s,S~I) -> MAP3K(s,S~A)

#MAP3K(s,S~A)+Scaff(map3k) <-> MAP3K(s!1,S~A).Scaff(map3k!1)
MAP3K(s,S~A),Scaff(map3k) <-> MAP3K(s!1,S~A),Scaff(map3k!1)

#MAP3K(s!1,S~I).Scaff(map3k!1) -> MAP3K(s,S~I)+Scaff(map3k)
MAP3K(s!1,S~I),Scaff(map3k!1) -> MAP3K(s,S~I),Scaff(map3k)

#MAP2K(s,R1~Y,R2~Y)+Scaff(map2k) <-> MAP2K(s!1,R1~Y,R2~Y).Scaff(map2k!1)
MAP2K(s,R1~Y,R2~Y),Scaff(map2k) <-> MAP2K(s!1,R1~Y,R2~Y),Scaff(map2k!1)

#MAP2K(s,R1~Yp,R2~Y)+Scaff(map2k) <-> MAP2K(s!1,R1~Yp,R2~Y).Scaff(map2k!1)
MAP2K(s,R1~Yp,R2~Y),Scaff(map2k) <-> MAP2K(s!1,R1~Yp,R2~Y),Scaff(map2k!1)

#MAP2K(s,R1~Y,R2~Yp)+Scaff(map2k) <-> MAP2K(s!1,R1~Y,R2~Yp).Scaff(map2k!1)
MAP2K(s,R1~Y,R2~Yp),Scaff(map2k) <-> MAP2K(s!1,R1~Y,R2~Yp),Scaff(map2k!1)

#MAP2K(s,R1~Yp,R2~Yp)+Scaff(map2k) <-> MAP2K(s!1,R1~Yp,R2~Yp).Scaff(map2k!1)
MAP2K(s,R1~Yp,R2~Yp),Scaff(map2k) <-> MAP2K(s!1,R1~Yp,R2~Yp),Scaff(map2k!1)

#MAPK(s,R1~Y,R2~Y)+Scaff(mapk) <-> MAPK(s!1,R1~Y,R2~Y).Scaff(mapk!1)
MAPK(s,R1~Y,R2~Y),Scaff(mapk) <-> MAPK(s!1,R1~Y,R2~Y),Scaff(mapk!1)

#MAPK(s,R1~Yp,R2~Y)+Scaff(mapk) <-> MAPK(s!1,R1~Yp,R2~Y).Scaff(mapk!1)
MAPK(s,R1~Yp,R2~Y),Scaff(mapk) <-> MAPK(s!1,R1~Yp,R2~Y),Scaff(mapk!1)

#MAPK(s,R1~Y,R2~Yp)+Scaff(mapk) <-> MAPK(s!1,R1~Y,R2~Yp).Scaff(mapk!1)
MAPK(s,R1~Y,R2~Yp),Scaff(mapk) <-> MAPK(s!1,R1~Y,R2~Yp),Scaff(mapk!1)

#MAPK(s!1,R1~Yp,R2~Yp).Scaff(mapk!1) -> MAPK(s,R1~Yp,R2~Yp) + Scaff(mapk)
MAPK(s!1,R1~Yp,R2~Yp),Scaff(mapk!1) -> MAPK(s,R1~Yp,R2~Yp) , Scaff(mapk)

#MAP3K(s!1,S~A).Scaff(map3k!1,map2k!2).MAP2K(s!2,R1~Y) -> MAP3K(s!1,S~A).Scaff(map3k!1,map2k!2).MAP2K(s!2,R1~Yp)
MAP3K(s!1,S~A),Scaff(map3k!1,map2k!2),MAP2K(s!2,R1~Y) -> MAP3K(s!1,S~A),Scaff(map3k!1,map2k!2),MAP2K(s!2,R1~Yp)

#MAP3K(s!1,S~A).Scaff(map3k!1,map2k!2).MAP2K(s!2,R2~Y) -> MAP3K(s!1,S~A).Scaff(map3k!1,map2k!2).MAP2K(s!2,R2~Yp)
MAP3K(s!1,S~A),Scaff(map3k!1,map2k!2),MAP2K(s!2,R2~Y) -> MAP3K(s!1,S~A),Scaff(map3k!1,map2k!2),MAP2K(s!2,R2~Yp)

#MAPK(s!1,R1~Y).Scaff(mapk!1,map2k!2).MAP2K(s!2,R1~Yp,R2~Yp) -> MAPK(s!1,R1~Yp).Scaff(mapk!1,map2k!2).MAP2K(s!2,R1~Yp,R2~Yp)
MAPK(s!1,R1~Y),Scaff(mapk!1,map2k!2),MAP2K(s!2,R1~Yp,R2~Yp) -> MAPK(s!1,R1~Yp),Scaff(mapk!1,map2k!2),MAP2K(s!2,R1~Yp,R2~Yp)

#MAPK(s!1,R2~Y).Scaff(mapk!1,map2k!2).MAP2K(s!2,R1~Yp,R2~Yp) -> MAPK(s!1,R2~Yp).Scaff(mapk!1,map2k!2).MAP2K(s!2,R1~Yp,R2~Yp)
MAPK(s!1,R2~Y),Scaff(mapk!1,map2k!2),MAP2K(s!2,R1~Yp,R2~Yp) -> MAPK(s!1,R2~Yp),Scaff(mapk!1,map2k!2),MAP2K(s!2,R1~Yp,R2~Yp)

#MAP3K(S~A) -> MAP3K(S~I)
MAP3K(S~A) -> MAP3K(S~I)

#MAP2K(R1~Yp) -> MAP2K(R1~Y)
MAP2K(R1~Yp) -> MAP2K(R1~Y)

#MAP2K(R2~Yp) -> MAP2K(R2~Y)
MAP2K(R2~Yp) -> MAP2K(R2~Y)

#MAPK(R1~Yp) -> MAPK(R1~Y)
MAPK(R1~Yp) -> MAPK(R1~Y)

#MAPK(R2~Yp) -> MAPK(R2~Y)
MAPK(R2~Yp) -> MAPK(R2~Y)
\end{lstlisting}

\end{document}